\definecolor{googleblue}{RGB}{34, 0, 204}
\definecolor{panblue}{RGB}{0,24,150}
\definecolor{carmine}{RGB}{150, 0, 24}
\newtheorem{theorem}{Theorem}
\newtheorem{lemma}{Lemma}
\newtheorem{proposition}{Proposition}
\newtheorem{corollary}{Corollary}
\newtheorem{definition}{Definition}
\newtheorem{remark}{Remark}
\newcommand{\elie}[1]{{\color{purple} #1}}
\newcommand{\GPT}{{\mathcal{GPT}}}
\newcommand{\IND}{\mathcal{I}}
\newcommand{\C}{\mathcal{C}}
\newcommand{\NEST}{{\mathcal{N\!M}}}
\newcommand{\BW}{{\mathcal{BW}}}
\newcommand{\MI}{{\mathcal{MI}}}
\newcommand{\MINF}{{\mathcal{MI{+}NF}}}
\newcommand{\VVS}{{\mathcal{VVS}}}
\newcommand{\MaxInt}{\operatorname{\textnormal{\textsf{MaxInt}}}}
\newcommand{\VVSplit}{\operatorname{\textnormal{\textsf{VVSplit}}}}
\newcommand{\Deflate}{\operatorname{\textnormal{\textsf{Deflate}}}}
\newcommand{\Proj}{\operatorname{\textnormal{\textsf{Proj}}}}
\newcommand\independent{\protect\mathpalette{\protect\independenT}{\perp}}
\def\independenT#1#2{\mathrel{\rlap{$#1#2$}\mkern2mu{#1#2}}}
\newtcolorbox[auto counter]{mybox}[2][]{
	enhanced,
	breakable,
	colback=blue!5!white,
	colframe=blue!75!black,
	fonttitle=\bfseries,
	title=Box \thetcbcounter: #2,#1
}
\newcommand{\too}[1][3pt]{\mathrel{%
    \vcenter{\hbox{\rule[-.5\fontdimen8\scriptfont3]
               {\scriptratio\dimexpr#1\relax}{\fontdimen8\scriptfont3}}}%
   \mkern-4mu\hbox{\let\f@size\sf@size\usefont{U}{lasy}{m}{n}\symbol{41}}}}
\begin{document}
\preprint{APS/123-QED}
\title{On the physics of nested Markov models: a generalized probabilistic theory perspective}
\author{Xingjian Zhang}
\email{zxj24@hku.hk}
\affiliation{QICI Quantum Information and Computation Initiative, School of Computing and Data Science, The University of Hong Kong, Hong Kong SAR, China}
\author{Yuhao Wang}
\email{yuhaow@tsinghua.edu.cn}
\affiliation{Institute for Interdisciplinary Information Sciences, Tsinghua University, Beijing, China}
\affiliation{Shanghai Qi Zhi Institute, Shanghai, China}
\author{Elie Wolfe}
\email{ewolfe@perimeterinstitute.ca}
\affiliation{Perimeter Institute for Theoretical Physics, Waterloo, Ontario, Canada}

\begin{abstract}
Determining potential probability distributions with a given causal graph is vital for causality studies. To bypass the difficulty in characterizing latent variables in a Bayesian network, the nested Markov model provides an elegant algebraic approach by listing exactly all the equality constraints on the observed variables. However, this algebraically motivated causal model comprises distributions outside Bayesian networks, and its physical interpretation remains vague. In this work, we inspect the nested Markov model through the lens of generalized probabilistic theory, an axiomatic framework to describe general physical theories. We prove that all the equality constraints defining the nested Markov model are valid theory-independently. At the same time, not every distribution within the nested Markov model is implementable, not even via exotic physical theories associated with generalized probability theories (GPTs).  
To interpret the origin of such a gap, we study three causal models standing between the nested Markov model and the set of all distributions admitting some GPT realization. Each of the successive three models gives a strictly tighter characterization of the physically implementable distribution set; that is, each successive model manifests new types of GPT-inviolable constraints. We further demonstrate each gap through a specially chosen illustrative causal structure. We anticipate our results will enlighten further explorations on the unification of algebraic and physical perspectives of causality.
\end{abstract}
\maketitle

\section{Introduction}

Causal inference is a subfield of data science concerned with assessing causal relationships among statistically correlated observable variables~\citep{pearl2009causality,spirtes2000causation,rubin2005causal}. Given multivariate data, causal discovery is the task of determining the true underlying causal process that most likely generated the given data~\citep{spirtes2000causation}. 
To represent the structure of a causal process, one common choice is the directed acyclic graph (DAG), where each variable is denoted as a vertex and each direct causal influence is denoted as an edge~\citep{pearl2009causality,spirtes2000causation}. Given a certain number of observed variables in the statistical data, a candidate causal process to underlay that data must be a DAG with \emph{at least} as many vertices as there are variables. Any extra vertices that do \emph{not} correspond to an observed variable are to be understood as \emph{latent systems} and encode the possibility of \emph{indirect} causal relationships, namely unobserved common causes.

Existing approaches to discovering the underlying causal structure are commonly split into two categories: score-based approaches~\citep{chickering2002optimal,Solus2021} and constraint-based approaches~\citep{spirtes2000causation,spirtes2001anytime,colombo2014order}. While score-based causal discovery aims to provide relative likelihoods for different plausible causal hypotheses, \emph{constraint-based} causal discovery is concerned with simply partitioning the set of all candidate causal hypotheses into two sets: those causal structures for which the observed data are consistent with the operational constraints implied by the causal structure, and those for which the observed data evidently \emph{violate} an implied constraint.

In order to perform constraint-based causal discovery, we must first understand how to tease out the operational constraints implied by a given DAG. An operational constraint is a test that can be applied to statistical data. A constraint is said to be implied by a DAG if every distribution \emph{that the DAG can generate} satisfies the given constraint. The most prominent form of constraint considered in constraint-based causal discovery is known as a \emph{conditional independence relation}. A constraint of the form \enquote{$A$ is conditionally independent of $B$ given $C$} would require all distributions involving variables $\{A, B, C\}$ (potentially among any number of other variables) to satisfy
\begin{equation}
    p_{ABC}(a,b,c)=p_{A|C}(a|c)p_{B|C}(b|c)p_{C}(c)
\end{equation}
for all values $a$, $b$, and $c$. In the following, we abbreviate $p_{ABC}(a,b,c)$ as $p(a,b,c)$, and similarly for the others when it is clear from the context.

To understand if a given constraint (conditional independence relation or any other) is implied by a DAG, we need to better understand what it means for the DAG to \enquote{generate} a distribution. The conventional attitude is to say that a given DAG is capable of generating a given distribution if and only if there exists some \emph{structural equation model} corresponding to the DAG, which would generate a distribution over variables corresponding to \emph{all} the vertices in the DAG, such that the distribution over the \emph{observed} variables is obtained by \emph{marginalizing out} (i.e., summing or integrating over) the \emph{unobserved} variables~\citep{pearl2009causality}. In short: one treats latent systems as ordinary random variables that---for whatever reason---are not observed. As a corollary, if a distribution can be generated by a DAG, then it must fall within a \emph{Markov} model for this DAG. That is, the random variables $A, B$ in this distribution are conditionally independent given $C$ whenever their corresponding vertices satisfy the so-called ``$d$-separation criterion'' in the DAG~\citep{henson2014theory,pearl2009causality}.

But herein lies the rub. The conventional approach to latent systems fails when confronted with the task of accounting for the predictions of quantum theory. Modern physics has taught us that some correlations mediated by a \emph{quantum} common cause resist any classical explanation that attempts to preserve the same causal structure while restricting to conventional modelling of latent systems~\cite{wood2015lesson,cavalcanti2018classical,fritz2012beyond,henson2014theory,Chaves2015,tavakoli2022reviewnetworks}. Take the famous Clauser-Horne-Shimony-Holt (CHSH) Bell test as an example~\cite{bell1964einstein,clauser1969proposed}, whose causal structure is depicted in Figure~\ref{fig:Bell}(a). There, the violation of Bell inequalities indicates that the correlations among the observed variables cannot be explained by applying the conventional approach to the test's causal structure. This is a consequence of the quantum common cause, i.e., $\Lambda$.
Indeed, quantum theory and classical theory (i.e., the conventional approach) constitute two distinct possible frameworks for interpreting a given DAG with latent systems \cite{henson2014theory,spirtes2000causation}. 
The set of distributions over the observed variables consistent with a \emph{classical} interpretation is contained within the set of distributions consistent with the quantum interpretation \cite{brunner2014bell}. The failure of classical explanations to account for quantum predictions, then, occurs whenever that containment relationship is \emph{strict}. 
In summary, the set of constraints implied by a DAG is therefore a function not \emph{merely} of the DAG itself but rather of the DAG \emph{together with a physical theory for interpreting its latent systems}.

\begin{figure}[hbt!]
\centering 
\includegraphics[width=\columnwidth]{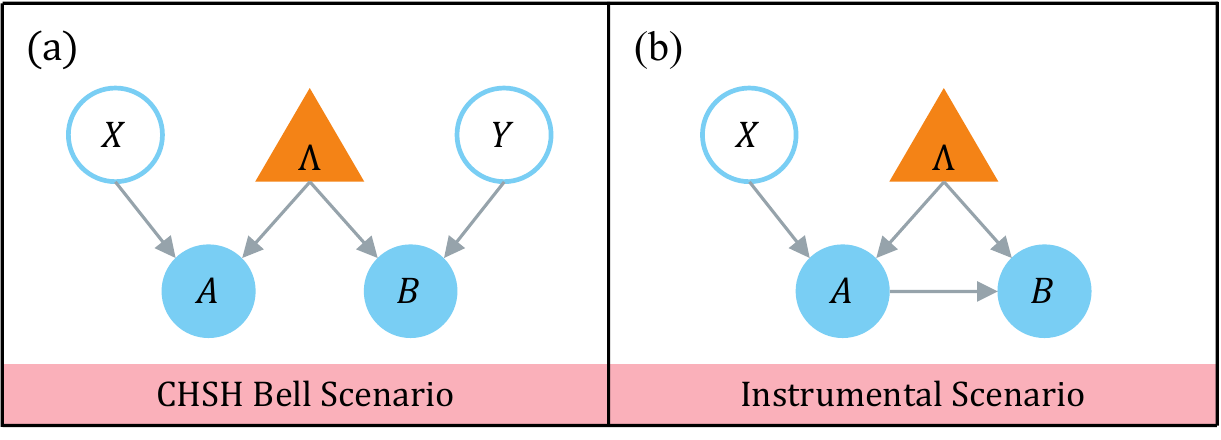}
\caption{DAGs for the CHSH Bell test and the instrument test. In a causal DAG, vertices represent variables, and directed edges represent causal influences, where the parental vertex represents the cause and the child vertex represents the effect. We distinguish observable variables from latent ones, as denoted by circles and triangles, respectively. 
(a) DAG of the CHSH Bell test. Vertices $X$ and $Y$ represent the measurement choices of the observers, $A$ and $B$ represent the measurement outcomes, and $\Lambda$ represents the latent variable, which is common cause of $A$ and $B$. In the CHSH test, all the observable variables take binary values. 
(b) DAG of the instrumental test. The vertex $X$ represents the instrumental variable, $A$ and $B$ are considered to be the direct cause and effect, and $\Lambda$ is the latent variable, which is common cause of $A$ and $B$.}
\label{fig:Bell}
\end{figure}

Upon appreciating this simple fact, a great deal of conventional wisdom is suddenly thrown into question: Given some set of constraints that hold \emph{classically} for a given DAG, do they also hold under the quantum interpretation? What sort of constraints are simultaneously well-justified by the classical interpretation yet violable by distributions admissible relative to the quantum interpretation? Could switching interpretation break the fundamental principle that $d$-separation relations in the DAG imply conditional independence relations in the DAG's compatible distributions? What about exotic physical theories that reduce to neither quantum nor classical physics?

In this work, we rebuild (from scratch) a hierarchical framework for deriving and justifying constraints that are \emph{inviolable across all physical theories}. We start by quoting a seminal result of~\citet{henson2014theory} that restores the primacy of $d$-separation. Namely, the set of distributions compatible with a DAG relative to \emph{any} physical interpretation must be contained in that DAG's Markov model.\footnote{A distribution is said to be in the Markov model for a given DAG if the distribution satisfies every conditional independence relation, which corresponds to a $d$-separation in the DAG over its observed variables.} We then claim (and eventually prove) a critical generalization, namely that the same containment relation even applies if we replace ordinary Markov models with nested Markov models \cite{tian2002testable,evans2017margins,richardson2023nested}. Roughly speaking, the nested Markov model is defined by exactly all the equality constraints governing the marginal model produced from the structural equations, comprising not only the conditional independencies but also the ones cited as the Verma constraint \cite{robins1986new,verma1990equivalence}.

The set of distributions compatible with a given DAG relative to \emph{any} physical theory is a concept central to everything that follows. To define this, we employ the framework of generalized probabilistic theories (GPTs)~\cite{barrett2007information,chiribella2010probabilistic,janotta2014generalized}. Each GPT is its own prescription for interpreting the distributions compatible with a DAG. In order to discuss the union of distributions across all possible GPTs we fallback on the even more general framework of operational probabilistic theories (OPTs)~\cite{chiribella2010probabilistic,dariano2017quantum}, following the same as-broad-as-possible approach taken by~\citet{henson2014theory}. A distribution admits \emph{some} GPT interpretation relative to a DAG only if it is consistent with \emph{the} OPT interpretation.

Thus, our nominal goal is to prove that OPT models are contained in the set of nested Markov models. We accomplish this proof by first defining a new model set---referred to as the \emph{Visible Vertex Split} model---which lies between the nested Markov model and OPTs: it implies the satisfaction of all nested Markov constraints while at the same time transparently containing all OPT-compatible distributions. In contrast to the nested Markov models, the new model manifests a type of \emph{inequality} constraints that are inviolable by any physical theory. These inequalities are consistent with a graphical criterion called the $e$-separation~\cite{evans2012graphical,finkelstein2021entropic}, which captures the idea of causal bottlenecks even between sets of variables not separated by any conditional independence relation.  Still, the \emph{Visible Vertex Split} model comprises distributions that are outside OPTs and hence non-physical. We then explore this gap by studying two further models, which we call the \emph{Maximal Interruption} model and a model further enhanced by \emph{Nonfanout Inflation} \cite{wolfe2021quantum}. As shown in Figure~\ref{fig:Venn}, the three new models all include GPTs and tighten the distribution set successively. Each model provides a new type of inequality constraints, which can be induced by a particular graphical criterion. We leave the question of whether further refinement might be possible relative to GPTs instead of the OPT framework as a desideratum for future work.

\begin{figure}[hbt!]
\centering 
\includegraphics[width=\columnwidth]{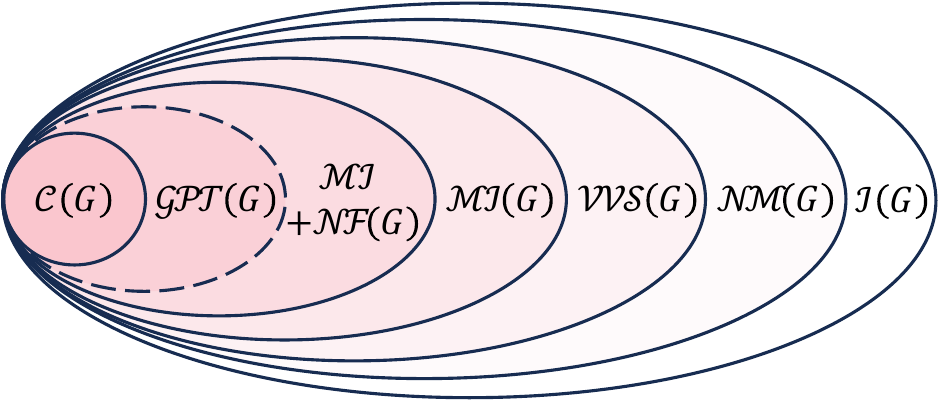}
\caption{Venn diagram of the causal models investigated in this work. In the literature, it is known that the classical model, $\C(G)$, is included in the GPT model, $\GPT(G)$, and the latter is further included in the causal model defined by all the conditional independencies, $\IND(G)$. In causal structures like the Bell test, each inclusion can be made strict. In this work, we further investigate the nested Markov model, $\NEST(G)$, a causal model that was known to provide a strictly tighter characterization than $\IND(G)$. We prove that $\GPT(G)\subseteq\NEST(G)$, and we show that there exist causal structures that make the inclusion strict. In the study of the nested Markov model, we further propose the visible-vertex-split model, $\VVS(G)$, the maximal interruption model, $\MI(G)$, and the maximal interruption plus nonfanout inflation model, $\MINF(G)$. We show $\MINF(G)\subseteq\MI(G)\subseteq\VVS(G)\subseteq\NEST(G)$, and there exist causal models that make each inclusion strict. Following \citet{wolfe2021quantum}, it is known $\GPT(G)\subseteq\MINF(G)$, but it remains open whether this inclusion is strict for certain causal structures.}
\label{fig:Venn}
\end{figure}

\section{Results}
\subsection{Basic concepts}
Before we commence, we first review the basic concepts of GPTs and causality. Starting from the investigation into the interpretation of quantum physics, the GPT framework was proposed to study and compare various physical theories~\cite{barrett2007information}. As the name suggests, an underlying assumption of GPTs is the validity of probabilities. In statistical theory, we can abstract the essential features of probabilities and formalize the concept of a probability kernel. Following~\citet{lauritzengraphical}, given finite state spaces, $\mathcal{X}_A,\mathcal{X}_B$, a probability kernel over $\mathcal{X}_A$ given $\mathcal{X}_B$ is a non-negative function
  \begin{equation}\label{def:kernel}
    q:\mathcal{X}_A\times\mathcal{X}_B\rightarrow\mathbb{R}_{+},
  \end{equation}
such that $\forall x_B\in\mathcal{X}_B,\sum_{x_A\in\mathcal{X}_A}q(x_A|x_B)=1$. Moreover, we require $q(x_A|x_B) \equiv \sum_{x_C \in \mathcal{X}_C} q(x_A, x_C|x_B)$ and $q(x_A|x_B, x_C) \equiv q(x_A, x_C|x_B) / q(x_C|x_B)$. In other words, a probability kernel behaves almost like a conditional probability, except that it bypasses the marginal distribution on the indexing set $\mathcal{X}_B$.

Within a given physical theory under the GPT framework, an observer accesses a physical state by performing measurements or effects on it and reading out the measurement outcomes. The measurement choice is denoted as a random variable, and the measurement outcome is described as a probability kernel over a fixed output alphabet given the measurement choice.
The probability kernel must satisfy certain rules, which stem from the laws on the valid physical states and operations defining the physical theory. These laws must observe a minimal set of assumptions, as described in Box~\ref{box:GPTAssumption}.

\begin{mybox}[label={box:GPTAssumption}]{{Assumptions in GPTs}}
{\centering \hfill(Following~\citet{barrett2007information})\hfill}
\par\noindent\emph{Single-party system}
\begin{enumerate}[leftmargin=*,nosep]
  \item The state of a single system can be determined by a probability kernel $p$ with respect to a complete set of fiducial measurements $\mathcal{F}$.
  \item The state set $\mathcal{S}$ is the convex hull of the allowed normalized states and a zero state $\vec{0}$.
  \item The set of allowed state transformations $\mathcal{T}$ is formed by \emph{all} linear maps which transform states into states. 
\end{enumerate}
\begin{samepage}
\emph{Multipartite systems}
\begin{enumerate}[leftmargin=*,nosep]\setcounter{enumi}{3}
  \item\emph{Parallel composition:} If $p_A\in\mathcal{S}_A$ and $p_B\in\mathcal{S}_B$, then $p_A \otimes p_B \in \mathcal{S}_{AB}$.
  \item\emph{No-signalling from the future:} 
  Given two parallel systems, under the action of any measurement on the first system, summing up the conditional states of the second system weighted by their probabilities leads to a state, which does not depend on the measurement applied to the first system.
\end{enumerate}
\end{samepage}
\end{mybox}

A particular choice of $\mathcal{S}$ and $\mathcal{T}$ together with possibly additional assumptions specifies a GPT. Note that measurements in the set of $\mathcal{F}$ are special physical operations belonging to $\mathcal{T}$.

\begin{remark}
    In the third assumption, we assume every possible linear map to be a valid physical transformation. This assumption is also called the \emph{no restriction hypothesis for transformations}. Consequently, this assumption implies the following properties:
\begin{enumerate}[leftmargin=*,nosep]
  \item The set of allowed transformations is closed under sequential composition.
  \item Transformations consisting of selecting a linear map on one system conditional on the outcome of a measurement of a parallel system are included.
  \item Every logically-possible effect is included in the effect space, a.k.a. the \emph{no restriction hypothesis for effects}.
\end{enumerate}
\end{remark}

Among the assumptions in Box~\ref{box:GPTAssumption}, Assumptions 4 and 5, which correspond to how multiple systems compose into a joint one, shall be essential to our results. In particular, Assumption 5 captures the essence of relativity, where the light speed poses a fundamental limit to information transmission. Intuitively, this assumption indicates that the state on the second system is invariant under the action of any measurement on the first system after ``forgetting" the outcome of the measurement on the first system. If this assumption is violated, namely, local operations on different parts of a joint system are not mutually commuting, then the disturbance by an operation at one place could be instantaneously detected by a distant part of the system. This indicates information transmission faster than the light speed and violates the rules of relativity. In Appendix~\ref{app:DAG}, we shall further explain the assumptions in Box~\ref{box:GPTAssumption}.

The GPT framework embeds the Newtonian and quantum physics. Going beyond, one of the most well-studied GPTs is the \emph{BoxWorld} theory~\cite{popescu1994quantum,barrett2007information,gross2010all}. In brief, BoxWorld is an explicit GPT capable of generating the most nonlocal statistics in Bell tests subjected to relativity. In the simplest case of the bipartite Bell scenario~\cite{clauser1969proposed},
consider two remote non-communicating observers, Alice and Bob. The two observers share a physical state, and they each apply random measurements on their share of the state. We denote their measurement choices as random variables $X$ and $Y$ and their measurement outcomes as variables $A$ and $B$, respectively. We denote the probability that the relation $X\cdot Y=A\oplus B$ is met as $S$. 
As there is no communication between the observers, each observer's measurement result is independent of the other's measurement choice, $X\independent B$ and $Y\independent A$, or written as
\begin{equation}\label{eq:nosignal}
\begin{split}
    p(a|x)&=\sum_b p(a,b|x,y)
    =\sum_b p(a,b|x,y'), \\
    p(b|y)&=\sum_a p(a,b|x,y)
    =\sum_a p(a,b|x',y). \\
\end{split}
\end{equation}
Consider the simple case where the inputs and outputs all take binary values in $\{0,1\}$. When the physical processes are subjected to classical physics, there is an upper bound of $S\leq3/4$~\cite{clauser1969proposed}. Quantum theory allows a higher value of it, exhibiting nonlocal correlations among the variables. However, $S$ is upper-bounded by $S\leq(2+\sqrt{2})/4$, still strictly away from unity~\cite{cirel1980quantum}. The BoxWorld theory closes this gap. Assume all the probability distributions $p(a,b|x,y)$ satisfying the constraints in Eq.~\eqref{eq:nosignal} represent valid physical processes on a bipartite state. Then, the following probability distribution represents a no-signalling physical process yielding a perfect correlation of $S=1$:
\begin{equation}\label{eq:PRBox}
    p(a,b|x,y)=\frac{1}{2}\delta_{a\oplus b=xy}.
\end{equation}
In the literature, this probability distribution is usually cited as the Popescu-Rohrlich (PR) box~\cite{popescu1994quantum}, which we denote as $p_{\mathrm{PR}}(a,b|x,y)$.

As mentioned above, DAGs provide a useful tool to visualize the causal structure among various variables in an event~\cite{pearl1995causal}. 
The parental vertex in an edge represents the cause, and the child vertex represents the effect. In Figure~\ref{fig:Bell}(a), we depict the DAG for the CHSH Bell test.
In classical causality, all the vertices, including observable and latent variables, are characterized as random variables, and causal influences are fully characterized by conditional probabilities. In this case, the causal model is also called a Bayesian network.
In a quantum or more general GPT causal theories, the latent variables correspond to the physical states in the underlying physical theory~\cite{henson2014theory}. Correspondingly, an observable variable with an incoming edge from the latent variable should be interpreted as the outcome of a measurement. Besides, a parental observable variable can serve as a classical control for the measurements.
In Appendix~\ref{app:DAG}, we shall rigorously state the physical interpretation of causal DAGs in GPTs.
Given a causal structure $G$, when the underlying physical theory is the Newtonian physics, we denote the set of valid probability distributions among observed variables as $\C(G)$.\elie{\footnote{Although all the example \emph{figures} in this manuscript have latent variables corresponding exclusively to parentless vertices, it should be noted that our \emph{results} apply to \emph{all} causal structures, including those with non-exogenous vertices corresponding to latent variables.}}
Analogously, given any physical theory within the GPT framework, we can obtain a set of valid probability distributions, whose union is denoted as $\GPT(G)$.
The BoxWorld theory is an oft-referenced \emph{particular} GPT; we denote the set of probability distributions consistent with the BoxWorld theory as $\BW(G)$. 
As shown in the CHSH Bell test, we have the strict inclusion $\C(G)\subsetneq\BW(G)$ in general.

To characterize plausible correlations in a causal structure, we are first interested in the independence conditions among observed random variables. 
In a Bayesian network, all the conditional independences can be enumerated from the Bayesian rules of probability calculation.
For instance, in Figure~\ref{fig:Verma}(a), as the causal influence of $X$ on $B$ is mediated by $A$, we have the conditional independence of $X\independent B|A$, or the Markov condition:
\begin{equation}
    p(x,a,b)=p(x)p(a|x)p(b|a).
\end{equation}
Alternatively, DAGs provide a graphical criterion to specify these conditional independencies, equivalently stated as the $d$-separation~\cite{pearl1995causal}. In brief, denote a pseudo path between two vertices as a sequence of edges linking them without two consecutive ones directing towards the same intermediate vertex. Here, the edges and vertices may involve the latent variables. Two vertices are $d$-separated from each other if they are not linked by a pseudo path, or $d$-separated by a set of vertices if the latter block all their linking pseudo paths.
While originally developed for the study of Bayesian networks, the $d$-separation holds for generalized latent variables to tell the statistical conditional independence conditions between observable variables~\cite{henson2014theory}.
For a causal structure characterized by a DAG $G$, we denote the set of probability distributions of observable variables whose conditional independence statements are consistent with the $d$-separation statements in $G$ as $\IND(G)$. 
In ~\citet{henson2014theory}, it has been proved that $\GPT(G)\subseteq\IND(G)$. In other words, the conditional independence conditions among observed variables are theory-independent.

\subsection{Nested Markov model in generalized probabilistic theories}\label{Sec:Nested}
As an equality constraint, each conditional independence condition in a causal structure, or equivalently, a $d$-separation condition, reduces a dimension of the statistical manifold for the set of valid distributions. 
However, in the study of Bayesian networks, it is found that there are other equality constraints in general. Consider the causal structure of the mediation test in Figure~\ref{fig:Verma}(a) and assume $\Lambda$ is a classical random variable for now. Using the Bayesian formula, one can prove the following expression,
\begin{equation}\label{eq:Verma}
    \partial_x \sum_{a}p(a|x)p(c|x,a,b) = 0,
\end{equation}
which means that the defined quantity ${q(c|b) \coloneqq \sum_{a}p(a|x)p(c|x,a,b)}$ is independent of $X$. Moreover, $q(c|b)$ is a probability kernel. This independence relation is called the Verma constraint~\cite{robins1986new,verma1990equivalence}. Different from the usual conditional independence conditions between random variables, the Verma constraint cannot be identified through the $d$-separation statements in $G$.

\begin{figure}[hbt!]
 \centering \includegraphics[width=\columnwidth]{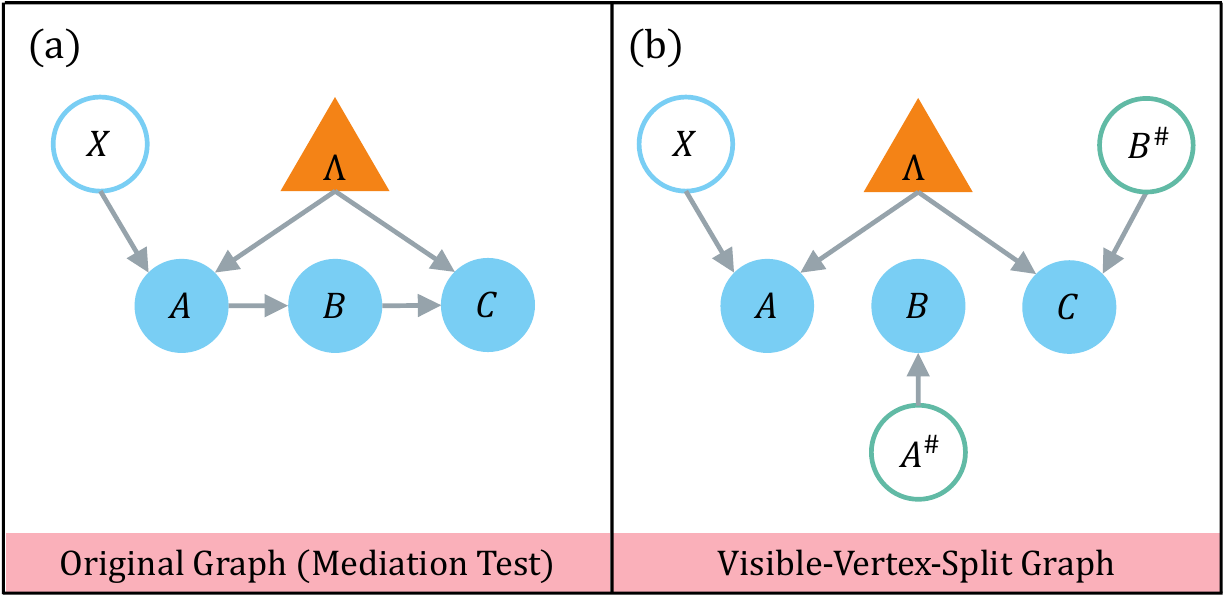}
\caption{Analysis for the mediation test scenario. (a) DAG of the mediation test. The causal structure includes observable variables $X,A,B$, and $C$, and a latent variable $\Lambda$. This is the simplest structure exhibiting the Verma constraint. 
(b) The visible-vertex-split graph. The green hollow circles are added vertices by vertex splitting, and a data post-selection of $A^\#{=}A$ and $B^\#{=}B$ projects the structure back to the mediation test.}
\label{fig:Verma}
\end{figure}

The original derivation of Verma constraints is restricted to Bayesian networks~\cite{robins1986new,verma1990equivalence}. In those proofs, besides the observed variables, the latent variables are also characterized as classical random variables. Verma constraints are obtained by decomposing the overall distribution of all the variables with respect to the usual Markov condition, applying the Bayes rule, and finally marginalizing over the distribution of latent variables.
In classical causal theory, given a causal structure $G$, the set of probability distributions that satisfy all the usual conditional independence conditions and Verma constraints is considered as the nested Markov model associated with $G$~\cite{evans2017margins,richardson2023nested}, which we denote as $\NEST(G)$.
The nested Markov model can be formally defined as the set of distributions, in which the $d$-separation statements about their identifiable kernels are consistent with conditional independence statements in the kernel space~\cite{richardson2023nested}. We shall rigorously define the model in Appendix~\ref{app:NestedDef}.
The nested Markov model is proven to provide an algebraically complete description of the Bayesian network with latent variables, listing exactly all the equality constraints among observed variables~\cite{evans2017margins}. Namely, $\C(G)$ and $\NEST(G)$ define manifolds in the probability simplex that are equivalent up to the difference of inequality constraints. Since a classical causal structure usually poses additional constraints, we have $\C(G)\subseteq\NEST(G)$, and there are causal structures making the inclusion strict.

Is the assumption of classicality of the latents critical to the derivation of Verma constraints? Or, perhaps Verma constraints and even the nested Markov model remain valid for GPT causal theories? As suggested by the proven validity of $d$-separation in GPTs~\cite{henson2014theory}, the possibility that Verma constraints might be properties inherent to a causal structure \emph{independent of the underlying physical theory} is a tantalizing prospect. Here, we affirm that this is indeed the case.

\begin{theorem}[Nested Markov property in GPTs]
  Consider a causal structure defined by DAG $G$, where the latent variables represent physical states in some GPT. Any possible probability distribution over the observable vertices satisfies all the nested Markov constraints for $G$. Formally, $\GPT(G)\subseteq\NEST(G)$.
\label{thm:nested}
\end{theorem}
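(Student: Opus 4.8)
The plan is to reduce the GPT case to the purely classical case by exploiting the structure of the Verma constraint itself. The key observation is that the Verma constraint, as written in Eq.~\eqref{eq:Verma}, is a statement purely about the observable conditional probabilities $p(a|x)$ and $p(c|x,a,b)$, with the latent variable $\Lambda$ already marginalized out. So the goal is to show that whatever a GPT state-plus-measurement model produces for these observable conditionals, it could equally have been produced by a classical model — at least as far as the relevant marginals entering the Verma constraint are concerned. More precisely, I would argue that for the purposes of checking a Verma constraint, the latent GPT system can be replaced by a classical variable without changing the observable distribution on the subgraph that the constraint lives on.

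First I would recall the general recipe for Verma constraints in the nested Markov model (cf.~\cite{evans2017margins,richardson2023nested}): every Verma constraint arises from some district (bidirected-connected set) of an induced subgraph obtained by the ``fixing'' operation, and the constraint asserts that a certain kernel obtained by summing over one observed variable — weighted by observable conditionals — is functionally independent of some other observed variables. The proof of validity in the classical (Bayesian network) setting proceeds by: (i) writing the global distribution as a product of Markov factors including the latent variables; (ii) applying Bayes' rule to re-express these factors; (iii) marginalizing over the latents. Step (iii) is where classicality of $\Lambda$ is ostensibly used. My strategy is to show that steps (i)–(iii) survive when $\Lambda$ is a GPT state, because the only property of $\Lambda$ actually invoked is that the observable variables it feeds into are obtained by \emph{local measurements} that commute (no-signalling, Assumption~5) and that the joint observable statistics factorize compatibly with the DAG via parallel composition (Assumption~4). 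In a GPT, a latent root node shared among several observed children, together with those children's measurements, produces a no-signalling box; and a no-signalling box over the children, conditioned on all incoming classical controls, can be \emph{simulated} by a classical hidden variable \emph{for any fixed choice of the control inputs} — the catch is whether one fixed classical variable works simultaneously across all inputs. Here is the point: the Verma constraint only ever evaluates the box at consistent input settings, and the sum defining $q(c|b)$ in Eq.~\eqref{eq:Verma} is exactly the kind of ``marginalize one output after reweighting'' operation that is insensitive to whether the underlying box is classical or merely no-signalling. Concretely, I expect to write $p(a|x)$ and $p(c|x,a,b)$ in terms of GPT fiducial-measurement kernels, substitute into Eq.~\eqref{eq:Verma}, and use the no-signalling marginal conditions (analogous to Eq.~\eqref{eq:nosignal}) together with the linearity of GPT states to collapse the $x$-dependence.

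The cleanest route may be to work directly with the algebraic identity rather than with any simulation argument. Since $\GPT(G)\subseteq\IND(G)$ is already established in~\cite{henson2014theory}, all d-separation-type conditional independences among observed variables are available \emph{for free} in the GPT model. I would then show that each Verma constraint is, in the GPT model, a formal consequence of these d-separations plus the no-signalling/parallel-composition structure — essentially by reproducing the classical algebraic derivation but justifying each factorization step either by an already-available conditional independence or by the GPT composition axioms. For the mediation test of Fig.~\ref{fig:Verma}(a), this means: writing the observed joint distribution using the available independences ($X$ independent of $\Lambda$, etc.), observing that $q(a,c|x,b)$ of Eq.~\eqref{eq:factorization} corresponds to a fixed local effect on the latent GPT state that does not depend on $x$ except through the control, and then summing over $a$. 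The general case would be handled by induction on the fixing operations used to build the district, exactly mirroring the classical recursive definition of the nested Markov model but with each recursive step licensed by GPT axioms.

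The main obstacle I anticipate is \textbf{the bookkeeping of the fixing/marginalization operation in the GPT language}: in the classical proof, ``fixing'' a variable corresponds to dividing by a conditional and treating the result as a new kernel, and one must check this operation is well-defined and produces a valid GPT-compatible object at each stage (not just a formal ratio). In particular, after fixing, one needs the post-fixing ``state'' of the latent system to still be a legitimate GPT state (or at least a legitimate linear functional on fiducial effects), so that the induction goes through. Relatedly, I must be careful that no step secretly uses a property stronger than no-signalling — e.g., that a joint state is a \emph{classical mixture} of product states, which is false in GPTs. The resolution I expect: phrase everything in terms of the observable kernels $q(\cdot|\cdot)$ defined purely from fiducial measurement statistics (which, by Assumptions 1–5, obey exactly the kernel identities stated after Eq.~\eqref{def:kernel}), never referring to the latent state's internal structure, so that the classical algebraic manipulations transfer verbatim. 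If that phrasing works, the theorem follows essentially by re-reading the classical proof of~\cite{evans2017margins} with ``probability of latent configuration'' replaced by ``fiducial-effect functional'' throughout, and the inclusion $\GPT(G)\subseteq\NEST(G)$ is then immediate from the definition of $\NEST(G)$ as the set cut out by all Verma constraints together with the d-separation constraints already handled in~\cite{henson2014theory}.
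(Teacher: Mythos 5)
Your ``cleanest route'' is essentially the paper's proof: the appendix runs the recursive district-factorization/marginalization procedure of the nested Markov definition and licenses each step by the GPT axioms, exactly as you propose, and the obstacle you flag (well-definedness of fixing/marginalization without assuming the latent state is a classical mixture) is resolved there by the uniqueness of the deterministic effect (Lemma~\ref{lemma:deterministic}) together with the observation that each latent vertex's outgoing systems are confined to its district, so the district kernels separate and marginalizing a childless vertex is just discarding its incoming system. Your first-paragraph idea of classically simulating the no-signalling box at consistent inputs is not used and would not be needed; the kernel-level argument you settle on is the one that works.
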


The proof of this theorem motivates a causal model standing between the nested Markov model and GPTs, which we shall defer to the middle of this manuscript.
Before we move on, we point out that this titular result comes along with a lovely corollary. In~\citet{evans2017margins}, it is proved that $\NEST(G)$ provides a complete characterization of the \emph{equalities} in $\C(G)$. That is, Evans shows that the Zariski closures of $\NEST(G)$ and $\C(G)$ coincide. A consequence of Theorem~\ref{thm:nested}, then, is:
\begin{corollary}
For any DAG $G$, $\NEST(G)$ provides a complete characterization of the \emph{equalities} in $\GPT(G)$, as the Zariski closures of 
$\C(G)$ and $\GPT(G)$ and $\NEST(G)$ all coincide.
\end{corollary}

To have a flavour of why the nested Markov model remains valid for GPTs, we describe a purely graphical procedure and specify the Verma constraint of Eq.~\eqref{eq:Verma} directly from the causal DAG in Figure~\ref{fig:Verma}(a), though this graphical procedure was originally developed in the context of Bayesian networks~\cite{tian2002general,tian2002testable}.
In general, given a causal DAG $G$ with the set of vertices representing observed variables $V$ and the set of vertices representing latent variables $W$, graphically, we may decompose $V$ into districts.
That is, two vertices $v_i,v_j\in V$ are said to be bi-directed-linked if there exists a divergent path of $v_i\leftarrow w_k\cdots w_l\rightarrow v_j$, such that $w_k,w_l\in W$ and there is a path between $w_k$ and $w_l$ going strictly within $W$ \cite{tian2002general}.
Then, following \citet{evans2017margins}, we define a district of vertices representing observed variables as the maximal inclusion set of vertices that are bi-directed-linked.
For a district of vertices $\{v_i\}_i\subseteq V$ and their parental vertices that represent observed variables, $\mathrm{pa}_G(\{v_i\}_i)\backslash W$, we can specify a probability kernel over the state space of the variables carried by $\{v_i\}_i$ given that of $\mathrm{pa}_G(\{v_i\}_i)\backslash W$. Here, we denote the parental vertices of a vertex $v$ in $G$ as $\mathrm{pa}_G(v)$ and similarly for a set of vertices.
For the causal structure in Figure~\ref{fig:Verma}(a), we can specify three districts among the vertices representing observed variables, given by $\{X\},\{B\}$, and $\{A,C\}$. Suppose there is a probability distribution $p(x,a,b,c)$ among the observed variables. In the study of Bayesian networks, it was shown that the distribution can be factorized as:
\begin{equation}\label{eq:factorization}
    p(x,a,b,c)=p(x)p(b|a)q(a,c|x,b),
\end{equation}
where $p(x),p(b|a)$, and $q(a,c|x,b)$ represent probability kernels corresponding to each district. 
We prove that the same factorization still holds under GPTs.
In these kernels, $p(x)$ and $p(b|a)$ are marginal distributions obtained from $p(x,a,b,c)$. Furthermore, $p(b|a)$ can be identified by the causal DAG with vertices $B$ and $A^\#$ in Figure~\ref{fig:Verma}(b), and $q(a,c|x,b)$ is a probability kernel over $\mathcal{X}_A$ and $\mathcal{X}_C$ given $\mathcal{X}_X$ and $\mathcal{X}_B$, which can be identified by the causal DAG with vertices $X,A,C,B^\#$, and $\Lambda$ in Figure~\ref{fig:Verma}(b). This vertex-split causal DAG has vertex $C$ $d$-separated from vertex $X$ (even) by vertex $B^\#$. Since for any causal DAG $G$ we know that $\GPT(G) \subseteq \IND(G)$, it therefore follows that this kernel satisfies 
\begin{equation}\label{eq:margin}
    q(c | b) = \sum_{a} q(a,c|x,b).
\end{equation}
Applying Bayes rule to the observed probability $p(x,a,b,c)$, one can easily check that 
\begin{equation}\label{eq:bayes}
    q(a,c|x,b)=\frac{p(x,a,b,c)}{p(x)p(b|a)}=p(a|x)p(c|x,a,b),
\end{equation}
which is the expression to be summed over in Eq.~\eqref{eq:Verma}.
In light of~\eqref{eq:margin} and~\eqref{eq:bayes}, we arrive at the Verma constraint in Eq.~\eqref{eq:Verma}.

\subsection{A first causal model between nested Markov and GPTs}\label{sec:comparison}

As a technical preparation to prove Theorem~\ref{thm:nested}, we must first distinguish between generic DAGs versus those more typically found in the study of Network Nonlocality \cite{tavakoli2022reviewnetworks}. Such DAGs, which we call network and semi-network type DAGs, are those admitting interpretations such that every observable variable can be understood uniquely as either a setting or a measurement outcome. Notions such as the set of \emph{no-signalling} correlations are only well-defined on these elementary-form causal structures.

\begin{definition}[\textbf{Network} and \textbf{Semi-Network} Type DAGs]
Consider a causal structure denoted by a DAG $G(V,W,\mathcal{E})$, where $V$ is the set of vertices representing observed variables, $W$ is the set of vertices representing latent variables, and $\mathcal{E}$ is the edge set. $G$ is said to be \textbf{semi-network} type if there does \emph{not} exist any vertex in $V$ with both (at least) one parent as well as (at least) one child. A semi-network type DAG is said to furthermore be \textbf{network} type if no vertex in $V$ has more than one child.
\end{definition}

Our first observation is that there is no \emph{opportunity} to further distinguish nested Markov models on (semi-) network-type DAGs from ordinary Markov models. As formally proven in Appendix~\ref{app:prop1}, a prerequisite for any nontrivial nested Markov constraints to be implied is that the graph must feature an observable-variable vertex with both at least one observable descendant and at least one observable ancestor. Semi-network-type DAGs, however, are exactly those \textit{without} such phenomenology. As such,
\begin{proposition}
    If $G$ is a network or semi-network-type DAG, then $\NEST(G)=\IND(G)$. 
\label{prop:already_seminetwork}
\end{proposition}
If $G$ is network-type and furthermore \emph{has only one latent variable}, then $\IND(G)=\BW(G)$, which is the set of correlations obtainable by the BoxWorld GPT~\cite{gross2010all}, also known as GNST~\cite{barrett2007information}.

Semi-network-type DAGs arise naturally in the context of \textbf{Single World Intervention Templates} (SWITs) introduced by~\citet{richardson2013single}. SWITs are graphs formed by performing a \emph{splitting} operation on vertices corresponding to observable variables whenever the vertex in question has both a parent and a child. We begin by recapitulating the \emph{visible vertex splitting} construction behind SWITs. We then explore the causal model induced by positing the existence of a consistent assignment of counterfactual probabilities (a.k.a. potential outcomes~\cite{rubin2005causal, shpitser2022multivariate}), which are \emph{ordinary} Markov with respect to the visible-vertex-split graph.

\begin{samepage}
\begin{definition}[The \textbf{Visible Vertex Splitting} Construction]\label{def:SWIG}
    Consider a causal structure denoted by a DAG $G(V,W,\mathcal{E})$, where $V$ is the set of vertices representing observed variables, $W$ is the set of vertices representing latent variables, and $\mathcal{E}$ is the edge set. To construct the \textbf{visible-vertex-split graph} of $G$, we proceed as follows:
    \par\nopagebreak
    \texttt{\textbf{Let}} $V_{\mathrm{split}}$ be the subset of vertices within $V$, which have both incoming and outgoing edges. That is, $v\in V_{\mathrm{split}}\subseteq V$ iff $\exists {(e: x\to v)}\in\mathcal{E}$ and $\exists {(e': v\to y)}\in\mathcal{E}$ for some $\{x,y\}\in V\cup W$.
    \par\nopagebreak
    \texttt{\textbf{For}} $v_i\in V_{\mathrm{split}}$, replace all the edges in $\mathcal{E}$ which \emph{originate} from $v_i$ by an edge which originates from $v^\#_{i}$ instead. 
    \par\nopagebreak
    Here, $v^\#_{i}$ is a newly-added root vertex equipped with a corresponding classical variable $X_{v^\#_i}$ sharing the same sample space as $X_{v_i}$.
    \par\nopagebreak
    The visible-vertex-split graph of $G$ is denoted as $\VVSplit(G)$. Denoting the set of its vertices of the form $v^\#_{\_}$ by $V^\#$, and denoting its (modified) edge set as $\mathcal{E}^{'}$, we say that the visible vertex splitting operation takes the causal structure $G(V,W,\mathcal{E})$ to a new causal structure, $\VVSplit(G)(V{\cup}V^\#,W,\mathcal{E}')$.
\end{definition}
\end{samepage}

Briefly speaking, the visible-vertex-split graph is constructed according to the following prescription: For any vertex that needs splitting, split it.

\begin{remark}\label{remark:VVSplit}
We observe the following properties of visible-vertex-split graphs:
\begin{enumerate}[leftmargin=*,nosep]
    \item The total number of edges in $\mathcal{E}$ and in $\mathcal{E}'$ are the same. The visible vertex splitting only \emph{modifies} edges.
    \item  Given a visible-vertex-split graph $G'(V{\cup}V^\#,W,\mathcal{E}')$ and $v^\#_i\in V^\#$, $v^\#_i$ has \emph{no parents} in $G'$.
    \item Visible vertex splitting is stable under iteration. That is, $\VVSplit(G) =  \VVSplit\bigl(\VVSplit(G)\bigr)$ for any starting DAG $G$.
    \item Every visible-vertex-split graph is a \emph{semi-network} type DAG.     
\end{enumerate}
\end{remark}

We give an example in Figure~\ref{fig:Verma}(b), where $A\rightarrow B$ is replaced with $A^\#\rightarrow B$, and $B\rightarrow C$ is replaced with $B^\#\rightarrow C$, with $A^\#$ and $B^\#$ being the newly added vertices. With the additional vertices introduced in the definition of $\VVSplit(G)$, every observed vertex is now either root or terminal. This makes $\VVSplit(G)$ a semi-network-type DAG.

As the reader may anticipate, the correlation set of $\VVSplit(G)$ should be closely related with that of the original graph $G$. To formalize the statements, we first define a distribution projection operation equivalent to the concept of \emph{consistency} in~\citet{richardson2013single}.

\begin{definition}[The Visible-Vertex-Split Projection]
    Given a causal DAG $G(V,W,\mathcal{E})$ and its associated visible-vertex-split graph $\VVSplit(G)(V{\cup}V^\#,W,\mathcal{E}')$ obtained from Definition~\ref{def:SWIG}, let $q(X_V|X_{V^\#})$ be any probability kernel among the observed variables $X_V=\{X_{v_i}\}_i$, and $X_{V^\#}=\left\{X_{v^\#_{j}}\right\}_j$. Then, the \textbf{visible vertex splitting-projected} probability ${p(X_V) \coloneqq \Proj\bigl(q(X_V|X_{V^\#})\bigr)}$ is defined by conditioning on events where the variables in $X_{V^\#}$ take on the same values as those dictated by their corresponding variables within $X_V$, namely,
    \begin{align} 
    &p\left(\bigwedge_{\;\mathclap{v_i \in V}} X_{v_i}{=}x_{v_i}\right)
    = q\left(\bigwedge_{\;\;\mathclap{v_i \in V}} X_{v_i}{=}x_{v_i}\bigg\vert\bigwedge_{\;\mathclap{\substack{v_i\in V:\\ {v^\#_i}\in V^\#}}} X_{{v^\#_i}}{=}x_{v_i}\right).
    \end{align}
\label{def:projection}
\end{definition}

We remark that ${{v^\#_{i}}\in {V^\#}}$ only if ${{v_{i}}\in {V}}$ but not necessarily \emph{vice versa}. It can be easily checked that $p=\Proj(q)$ is also a probability kernel. 
The utility of the projection is that it relates the correlations over $X_V$ in any causal DAG to the correlations over ${X_V{\cup}X_{V^\#}}$ in its visible-vertex-split graph for any GPT, as shown by the following proposition.

\begin{proposition}\label{prop:projection}
    Given a GPT, $\mathcal{G}$, and a causal DAG, $G(V,W,\mathcal{E})$, denote $\mathcal{G}(G)$ as the set of valid distributions among observed variables of $G$ within $\mathcal{G}$. A distribution $p(X_V)$ is within $\mathcal{G}(G)$ if and only if there exists some probability kernel $q(X_V|X_{V^\#})$, which is within the set of distributions $\mathcal{G}\bigl(\VVSplit(G)\bigr)$, such that $p$ is the projection of $q$. Formally, $p(X_V) \in \mathcal{G}(G)$ iff
    \begin{align*}
    \exists_{q(X_V|X_{V^\#})\in\mathcal{G}\left(\VVSplit(G)\right)}\text{ s.t. }p(X_V)=\Proj\bigl(q(X_V|X_{V^\#})\bigr).
    \end{align*}
\end{proposition}

The proof is quite intuitive. Since we only add vertices corresponding to observed variables, the effect on any original vertex in $G$ is not changed. Specifically, the outgoing systems of each latent variable remain the same; for each observed vertex $v$ in $G$, its incoming systems remain the same; and for every measurement choice, the physical measurement performed on the system is the same. We leave the formal proof in Appendix~\ref{app:prop2}.

Now, let us inspect the operations in Definitions~\ref{def:SWIG} and~\ref{def:projection}. Physically, $\IND(\VVSplit(G))$ is defined via conditional independence constraints, thereby representing the minimal requirement to define the physical rules of calculating a probability kernel in consistency with relativity principles. Recall that pursuant to Proposition~\ref{prop:already_seminetwork}, $\IND\bigl(\VVSplit(G)\bigr)=\NEST\bigl(\VVSplit(G)\bigr)$. This universality encourages us to define a causal model consisting of the projection of the set of kernels that are ordinary Markov with respect to the visible-vertex-split graph. 

\begin{definition}[The Visible-Vertex-Split Model]\label{def:ssg}
   Given a causal DAG $G$, define its \textbf{visible-vertex-split model} as $\VVS(G)\coloneqq \Proj\bigl(\IND\bigl(\VVSplit(G)\bigr)\bigr)$.
   Formally, $p(X_V) \in \VVS(G)$ iff
    \begin{align*}
    \exists_{q(X_V|X_{V^\#})\in\mathcal{I}\left(\VVSplit(G)\right)}\text{ s.t. }p(X_V)=\Proj\bigl(q(X_V|X_{V^\#})\bigr).
    \end{align*}
\end{definition}

Since \citet{henson2014theory} already proved that GPT-realizable correlations are contained in the set of ordinary Markov models for any DAG, the following lemma follows immediately:

\begin{lemma}\label{lemma:easy_part}
  For any DAG $G$, ${\GPT(G)\subseteq\VVS(G)}$.
\end{lemma}

Our titular claim that ${\GPT(G)\subseteq\NEST(G)}$ follows from the following fact.

\begin{lemma}\label{lemma:hard_part}
  For any DAG $G$, ${\VVS(G)\subseteq\NEST(G)}$.
\end{lemma}
Lemma~\ref{lemma:hard_part} was implicitly indicated by~\citet{richardson2013single} (Proposition~17), though we provide our own proof in Appendix~\ref{app:lemma1}. Lemmas~\ref{lemma:easy_part} and~\ref{lemma:hard_part} together cumulatively imply Theorem~\ref{thm:nested}.

We can also recognize that for some DAGs, $\VVS(G)=\NEST(G)$. Notably, for all \emph{semi-network} type DAGs,  
\begin{align*}
\VVS(G)=\NEST(G)=\IND(G),
\end{align*}
since semi-network type DAGs have the property of being invariant under visible vertex splitting.

While the nested Markov model fully \emph{covers} the predictions of GPTs, we now show that the converse is not true. Namely, there exist causal structures such that $\GPT(G)\subsetneq\NEST(G)$. This should not be surprising, as the nested Markov model is defined solely by equality constraints in the probability space. On the other hand, the requirement of ``physical'' states and operations in a GPT leads to non-trivial inequality constraints. In other words, GPTs do not provide the exact physical interpretation of the nested Markov model. 
A paradigmatic example of such a graph is the renowned \emph{Instrumental} scenario~\cite{wright1928tariff} shown in Figure~\ref{fig:Bell}(b). As previously shown by \citet{henson2014theory,van2019quantum}, all distributions within the visible-vertex-split model of the Instrumental scenario, where the visible vertex split graph is the Bell test scenario in Figure~\ref{fig:Bell}(a), must satisfy Pearl's instrumentality inequality~\cite{pearl1995testability}:
\begin{equation}\label{eq:instrument}
    \max_a\sum_b\max_x p(a,b|x)\leq 1.
\end{equation}
On the other hand, the Instrumental scenario does not exhibit any equality constraints, either the ordinary conditional independence or Verma-type constraints. Consequently, the models of $\NEST(G_{\mathrm{ins}})$ and $\IND(G_{\mathrm{ins}})$ saturate the entire probability space for $G_{\mathrm{ins}}$ being the causal DAG of the Instrumental scenario. Hence, we have the strict inclusion of $\VVS(G_{\mathrm{ins}})$ within $\NEST(G_{\mathrm{ins}})$, and $\GPT(G_{\mathrm{ins}})\subsetneq\NEST(G_{\mathrm{ins}})$ follows.
Incidentally, \citet{van2019quantum} additionally demonstrates that $\C(G) \subsetneq \GPT(G)$. These results together demonstrate that the Instrumental scenario leads to a fine-grained hierarchy of various causal models.

One way to understand that a wide class of graphs where $\VVS(G)$ is strictly contained within $\NEST(G)$---including the causal DAG of the Instrumental scenario---is to appreciate that $\VVS(G)$ is always constrained by the principle of \(e\)-separation, and this principle is the origin of some non-trivial inequality constraints similar to Pearl's instrumentality inequality~\cite{evans2012graphical}. 
As the nested Markov model overlooks inequality constraints, consequently, $\VVS(G)\subsetneq\NEST(G)$ for any graph which exhibits non-trivial \(e\)-separation relations.
We refer the readers to~\citet{finkelstein2021entropic} for a discussion on deriving inequality constraints for generic DAGs via the principle of $e$-separation. In Appendix~\ref{app:esep}, we review the $e$-separation criterion and prove why this principle follows from the principle of visible vertex splitting.

\subsection{A second causal model between nested Markov and GPTs}\label{sec:maxint}

Having established that ${\GPT(G)\subseteq\VVS(G)}$, we next inquire if that inclusion is strict, or even perhaps the two sets coincide.
Indeed, there do exist some causal DAGs where GPT models \emph{saturate} the set $\VVS(G)$, with the Instrumental scenario being such an example. Actually, any causal DAG $G$ with a single latent variable, and where furthermore no observable-variable vertex has \emph{multiple children}, is such that $\VVS(G)=\BW(G)$, since $\VVSplit(G)$ is a network-type DAG that corresponds to a (multipartite) Bell test scenario, and $\IND\bigl(\VVSplit(G)\bigr))=\BW\bigl(\VVSplit(G)\bigr))$. Notably, as BoxWorld is one \emph{particular} GPT, we know that $\BW(G)\subseteq \GPT(G)$. Putting these arguments together, we thus have $\GPT(G)=\VVS(G)$ for such causal DAGs.

That said, there are plenty of DAGs where ${\GPT(G)\subsetneq\VVS(G)}$. We begin to illustrate this point by constructing a second causal model which subsumes all GPT-realizable distributions but which, for some DAGs, is strictly contained within $\VVS(G)$.

To introduce this next causal model, we first define operations of a causal DAG extension known as \emph{Maximal Interruption} and a projection postulate (a.k.a. the \emph{Interruption Lemma}) allowing us to define models on the original graph by their relationship to ordinary Markov models in the associated interruption graph. Note that maximal interruption graphs were introduced by \citet{wolfe2021quantum} as means to upgrade a computational method for characterizing the set of \emph{quantum} correlations with a network into a method suitable for analyzing more general DAGs. Our approach is similar, except that instead of focusing on \emph{quantum} correlations, we essentially upgrade the network-specific concept of no-signalling into a causal model applicable to general DAGs.

\begin{samepage}
\begin{definition}[The \textbf{Maximal Interruption} Construction]\label{def:hypergraph}
    Consider a causal structure denoted by a DAG $G(V,W,\mathcal{E})$, where $V$ is the set of vertices representing observed variables, $W$ is the set of vertices representing latent variables, and $\mathcal{E}$ is the edge set. To construct the \textbf{maximal interruption graph} of $G$, we proceed as follows:
    \par\nopagebreak
    \texttt{\textbf{Let}} $V_{\mathrm{split}}$ be the subset of vertices within $V$ which have both incoming and outgoing edges. That is, $v\in V_{\mathrm{split}}$ iff $\exists {(e: x\to v)}\in\mathcal{E}$ and $\exists {(e': v\to y)}\in\mathcal{E}$ for some $\{x,y\}\in V\cup W$.
    \par\nopagebreak
    \texttt{\textbf{Let}} $V_{\mathrm{rip}}$ be the subset of vertices within $V$ each of which has more that one outgoing edges. That is, $v\in V_{\mathrm{rip}}$ iff $\left|\{e_j:\textsf{origin}(e_j)=v\}\right|\geq 2$.
    \par\nopagebreak
    \texttt{\textbf{For}} $v\in V_{\mathrm{split}}\cup V_{\mathrm{rip}}$, replace each of its outgoing edges $e_j$ with an edge that originates from $u_{e_j}$ instead, where $u_{e_j}$ is a newly-added root vertex equipped with a corresponding classical variable $X_{u_{e_j}}$ sharing the same sample space as $X_v$. 

     The maximal interruption graph of $G$ is denoted as ${\MaxInt(G)}$. Denoting the set of its vertices of the form $u_{e}$ by $U$, and denoting its (modified) edge set as $\mathcal{E}'$, we say that the maximal interruption operation takes the causal structure $G(V,W,\mathcal{E})$ to a new causal structure $\MaxInt(G)(V{\cup}U,W,\mathcal{E}')$.
\end{definition}
\end{samepage}

The maximal interruption graphs share similar properties as those of visible-vertex-split graphs in Remark~\ref{remark:VVSplit}, except a critical difference in dealing with the vertices representing observed variables with multiple child vertices in the original causal structure.

\begin{remark}
    It is worth contrasting the maximal interruption construction here with the visible-vertex-splitting construction underlying SWITs. Unlike SWITs, maximal interruption introduces new auxiliary vertices for \emph{each} child of a vertex in the original DAG.
    Consequently, every maximal interruption graph is a \emph{network} type DAG.
\end{remark}

We can now define distribution projection operation for maximal interruption graphs following the same idea as in Definition~\ref{def:projection}.

\begin{definition}[The Maximal Interruption Projection]\label{def:projectionMI}
    
    Given a causal DAG $G(V,W,\mathcal{E})$ and its associated maximal interruption graph $\MaxInt(G)(V{\cup}U,W,\mathcal{E}')$ obtained from Definition~\ref{def:hypergraph}, let $q(X_V|X_U)$ be some valid probability kernel among the observed variables $X_V=\{X_{v_1},\cdots,X_{v_n}\}$, and $X_U=\{X_{u_{e_1}},\cdots,X_{u_{e_k}}\}$, where $e_i$ iterates over the set of $k$ edges within $\mathcal{E}$ which are outgoing from vertices in $V$. It is convenient to partition the nodes in $X_U$ according to their corresponding origin-of-edge vertex in $\mathcal{E}$. That is, we write 
    \begin{align}
      X_U = \bigcup_{v_i\in V} \bigcup_{\substack{e_j\in \mathcal{E}:\\\textsf{origin}(e_j)=v_i}} X_{u_{e_j}}.
    \end{align}
    Then, the \textbf{maximal interruption-projected} probability ${p(X_V) \coloneqq \Proj\bigl(q(X_V|X_U)\bigr)}$ is defined by conditioning on events where the variables in $X_U$ take on the same values as those dictated by the variables in $X_V$, using origin-of-edge vertex as the mapping correspondence.
    That is,
    \begin{align}
      &p\left(\bigwedge_{\;\mathclap{v_i \in V}} X_{v_i}{=}x_{v_i}\right)
    \\\nonumber &
    = q\left(\bigwedge_{\;\mathclap{v_i \in V}} X_{v_i}{=}x_{v_i}\bigg|\bigwedge_{v_i \in V} \bigwedge_{\substack{e_j\in \mathcal{E}:\\\mathclap{\textsf{origin}(e_j)=v_i}}} X_{u_{e_j}}{=}x_{v_i}\right)
    \end{align}
\end{definition}
For simplicity, we use the same notation for the maximal interruption projection as for the visible vertex splitting projection. It can be easily checked that $p=\Proj(q)$ is also a probability kernel. 
The utility of the projection is that is relates the correlations over $X_V$ in any causal DAG to the correlations over ${X_V{\cup}X_U}$ in its maximal interruption graph for any GPT. Since it is more straightforward to describe a GPT in terms of its predictions with respect to network-type DAGs, this projection can be leveraged to translate intuitions about GPTs in network-type DAGs into results pertinent to non-network-type DAGs. We here reproduce the \emph{Fundemental Lemma of Interruption} from \citet{wolfe2021quantum} in the notation of this article:

\begin{proposition}\label{prop:projectionMI}
    Given a GPT, $\mathcal{G}$, and a causal DAG, $G(V,W,\mathcal{E})$, a distribution $p(X_V)$ is within the set of correlations $\mathcal{G}(G)$ if and only if there exists some probability kernel $q(X_V|X_U)$, which is within the set of correlations $\mathcal{G}\bigl(\MaxInt(G)\bigr)$, such that $p$ is the projection of $q$. Formally, $p(X_V) \in \mathcal{G}(G)$ iff
    \begin{align*}
    \exists_{q(X_V|X_U)\in\mathcal{G}\left(\MaxInt(G)\right)}\text{ s.t. }p(X_V)=\Proj\bigl(q(X_V|X_U)\bigr).
    \end{align*}
\end{proposition}

Note that Proposition~\ref{prop:projectionMI} is nearly identical to Proposition~\ref{prop:projection}, except that it has been upgraded to apply to maximal interruption graphs. The proof of Proposition~\ref{prop:projectionMI} is essentially the same as the proof of Proposition~\ref{prop:projection} by noticing that only the vertices corresponding to observable variables are being intervened upon. The actual physical states represented by the latent vertices and transformations (including effects) represented by them together with their child vertices are not changed. We leave the formal proof in Appendix~\ref{app:prop2}.

Now, let us inspect the operations in Definitions~\ref{def:hypergraph} and~\ref{def:projectionMI}. Physically, $\IND(\MaxInt(G))$ is defined via conditional independence constraints in a network-type DAG, thereby representing the minimal physical requirement of the no-signalling condition for a valid observable probability distribution.
Recall that pursuant to Proposition~\ref{prop:already_seminetwork}, $\IND\bigl(\MaxInt(G)\bigr)=\NEST\bigl(\MaxInt(G)\bigr)$. This universality encourages us to define a causal model consisting of the projection of the set of kernels that are ordinary Markov with respect to the maximal interruption graph. 

\begin{definition}[The Maximal Interruption Model]
   Given a causal DAG $G$, define its \textbf{maximal interruption} model as $\MI(G)\coloneqq \Proj\bigl(\IND\bigl(\MaxInt(G)\bigr)\bigr)$.
   Formally, $p(X_V) \in \MI(G)$ iff
    \begin{align*}
    \exists_{q(X_V|X_U)\in\mathcal{I}\left(\MaxInt(G)\right)}\text{ s.t. }p(X_V)=\Proj\bigl(q(X_V|X_U)\bigr).
    \end{align*}
\label{def:MIset}
\end{definition}

Using Propositions~\ref{prop:already_seminetwork} and~\ref{prop:projectionMI} it should be immediately clear that

\begin{lemma}\label{lemma:MIproject}
For any DAG $G$, ${\GPT(G)\subseteq\MI(G)\subseteq\VVS(G)}$.
\end{lemma}

Compared to the causal model defined through the visible-vertex-split graph, $\VVS(G)$, there are additional constraints in $\MI(G)$ that arise only through the maximal interruption construction and projection. That is, there are causal structures where $\MI(G)\subsetneq\VVS(G)$.\footnote{Of course, there are also causal structures where $\MI(G)=\VVS(G)$, such as any graph where visible vertices never exceed more than one child.} Such causal structures include the ones that demonstrate the phenomenon of the monogamy of nonlocality \cite{augusiak2014elemental}. Here, we provide a simple example inspired by a recent result by \citet{centeno2024significance}. Consider the causal DAG $G_{\mathrm{exo}}$ shown in Figure~\ref{fig:MaxInterrupt}(a), which represents the causal structure called the exogenized scenario. This causal DAG contains observable variables $X,Y,A,B,C$ and a latent variable $\Lambda$. It can be easily checked that $G_{\mathrm{exo}}$ is a semi-network-type DAG and hence $\VVSplit(G_{\mathrm{exo}})=G_{\mathrm{exo}}$. Nevertheless, since $Y$ has multiple child vertices, its maximal interruption graph is non-trivial, which is shown in Figure~\ref{fig:MaxInterrupt}(b) with additional vertices $U_{Y\too C}$ and $U_{Y\too B}$ added as per the edges directed from $Y$ to $B$ and $C$ in $G_{\mathrm{exo}}$, respectively. Now consider the following distribution:
\begin{equation}\label{eq:twolayerdist}
    p(a,b,c|x,y)=p_{\mathrm{PR}}(a,b|x,y)\delta_{b=c},
\end{equation}
with $p_{\mathrm{PR}}(a,b|x,y)$ representing the PR box given in Eq.~\eqref{eq:PRBox}.
Actually, one can consider any bipartite Bell-test distribution that violates the classical bound of a Bell inequality in the above distribution.
This distribution does not violate any conditional independence constraint in Figure~\ref{fig:MaxInterrupt}(a), namely, it belongs to the causal model $\VVS(G_{\mathrm{exo}})$. However, in the maximal interruption graph, we can specify additional non-trivial independence constraints after new vertices are added. For instance, for any valid distribution $p(a,b,c|x, u_{Y\too C}, u_{Y\too B})$ in $\MI(G_{\mathrm{exo}})$, we have the constraint
\begin{equation}
\begin{split}
    &p(a,c|x,u_{Y\too C}, u_{Y\too B})=p(a,c|x,u_{Y\too C}).
\end{split}
\end{equation}
After considering all the non-trivial independence constraints and applying the maximal interruption projection, it can be checked that the distribution in Eq.~\eqref{eq:twolayerdist} is not in $\MI(G_{\mathrm{exo}})$ and hence not realizable in the causal structure of $G_{\mathrm{exo}}$ in any GPT \cite{centeno2024significance}.

\begin{figure}[hbt!]
\centering
 \includegraphics[width=\columnwidth]{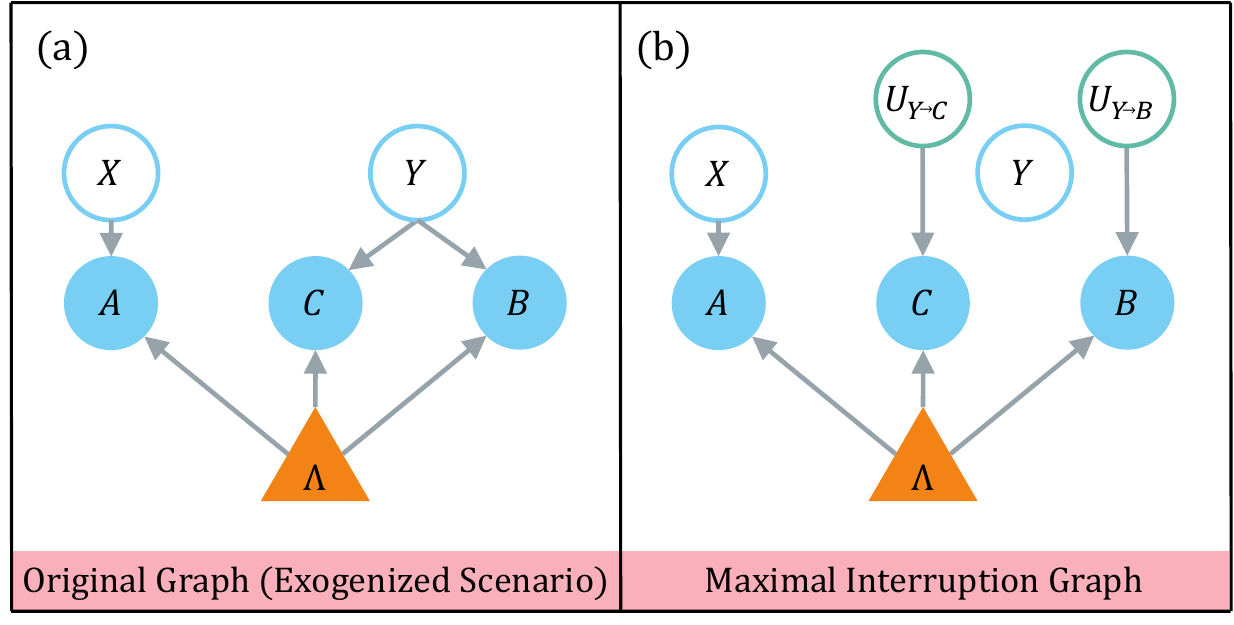}
\caption{Analysis for the exogenized scenario. (a) DAG of the exogenized scenario. In this causal structure, the observed variable $Y$ has multiple child vertices. (b) The maximal interruption graph. The green hollow circles are added vertices by maximal interruption, and a data post-selection of $U_{Y\too C}=U_{Y\too B}=Y$ projects the structure back to the exogenized scenario.}
\label{fig:MaxInterrupt}
\end{figure}

To conclude the discussion of maximal interruption, we specify the causal model $\MI(G)$ for certain causal DAGs. Firstly, for any DAG $G$ that is invariant under maximal interruption, its post-selection distribution set is the same as the ordinary Markov set. That is,
\begin{align*}
\text{For all \emph{network} type DAGs:}\quad   \MI(G)=\NEST(G)=\IND(G)
\end{align*}
since network-type DAGs have the property of being invariant under maximal interruption. Secondly, sometimes the maximal interruption principle exactly captures the set of GPT realizable distributions. The following proposition lists one such case.

\begin{proposition}
    If $G$ is a causal DAG with no more than one vertex corresponding to a latent variable, then $\GPT(G)=\MI(G)$. 
\label{prop:MINF_saturated}
\end{proposition}

The proof can be found in Appendix~\ref{app:prop_one_latent}.

\subsection{Nonfanout Inflation: constraining GPTs beyond maximal interruption}
In constructing hypergraphs that lead to the causal models analyzed thus far, we simply introduce auxiliary vertices that represent copies of observed variables. A natural following question is whether the maximal interruption model provides a tight characterization of the GPT-realizable distributions. For this purpose, consider the triangle scenario shown in Figure~\ref{fig:Triangle}(a), characterized by the causal DAG $G_{\mathrm{tri}}$. The causal structure is naturally invariant under maximal interruption, as it does not exhibit any directed edges originating from a vertex corresponding to an observed variable. At the same time, as the vertices in the triangle scenario are linked with each other nontrivially, we expect limitations on the strengths of the correlations that the triangle scenario can exhibit. Indeed, \citet{henson2014theory} proved that probability distribution
\begin{equation}\label{eq:synchr}
    p(a,b,c)=
    \begin{cases}
        \frac{1}{2}, & \mbox{if } a=b=c=0, \\
        \frac{1}{2}, & \mbox{if } a=b=c=1, \\
        0, & \mbox{otherwise}
    \end{cases}
\end{equation}
cannot be realized in the triangle scenario by any GPT. Evidently, then, $\GPT(G_{\mathrm{tri}})\subsetneq\MI(G_{\mathrm{tri}})$. Incidentally, the triangle scenario also exhibits ${\C(G_{\mathrm{tri}})\subsetneq\GPT(G_{\mathrm{tri}})}$ as per~\citet{fritz2012beyond}.

\begin{figure}[hbtp!]
\centering
 \includegraphics[width=\columnwidth]{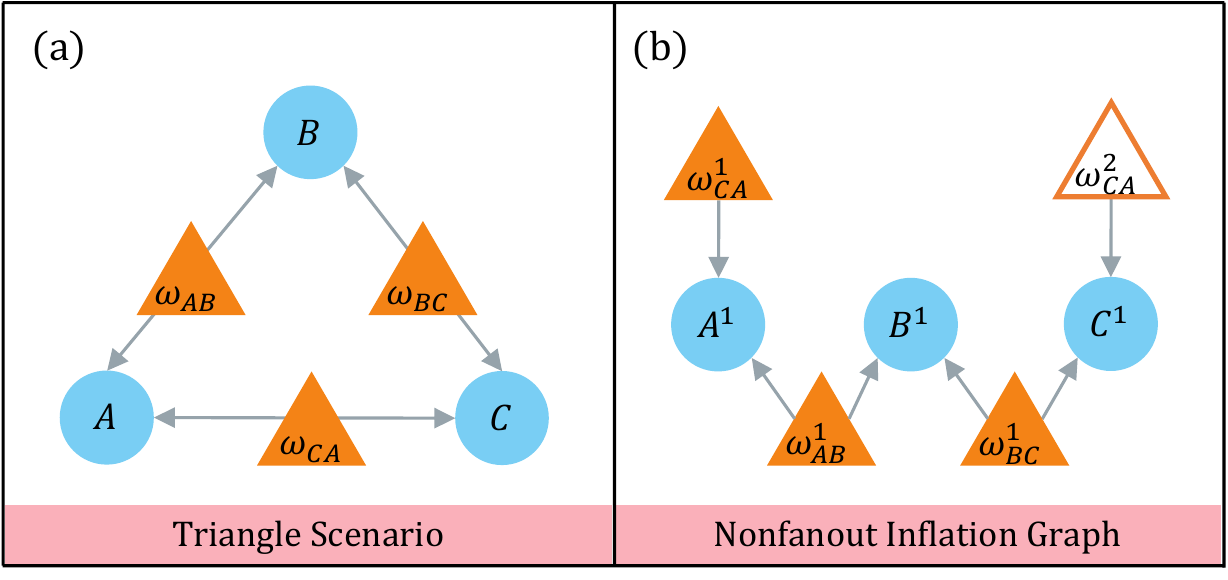}
\caption{Analysis for the triangle scenario.
(a) DAG of the triangle scenario, where $\GPT(G)\subsetneq\MI(G)$. The circles represent observable variables, including $A,B$, and $C$. The triangles represent the latent variables, including $\omega_{AB},\omega_{BC}$, and $\omega_{CA}$.
(b) A small nonfanout inflation of the triangle causal structure. $\{A^1,B^1\}$ is an injectable set, as is $\{B^1,C^1\}$. By contrast, the pair $\{A^1,C^1\}$ is not injectable, as the ancestors of $\{A^1,C^1\}$ include two different vertices with the same name, namely $\omega_{CA}^1$ and $\omega_{CA}^2$. We leave the added triangle unshaded to highlight that it has a different copy index from the others.}
\label{fig:Triangle}
\end{figure}

The proof by \citet{henson2014theory} can retrospectively be understood as a special example of a proof by \emph{nonfanout inflation}. The concept of nonfanout inflation was introduced by~\citet{wolfe2019inflation} as a restricted variant of a more general inflation technique. Here, we present a stand-alone mini guide to nonfanout inflation.

\begin{definition}[\textbf{Nonfanout Inflation Graphs}]\label{def:nonfanoutinflationgraph}
Consider a causal structure denoted by a DAG $G(V,W,\mathcal{E})$, where $V$ is the set of vertices representing observed variables, $W$ is the set of vertices representing latent variables, and $\mathcal{E}$ is the edge set. A graph $G'(V',W',\mathcal{E}')$ is said to be \textbf{a nonfanout inflation of $\boldsymbol G$} if the vertices in $G'$ are \emph{copies} of those in $G$, and if furthermore $G'$ \emph{locally} matches the structure of $G$.\footnote{Note that \citet{wolfe2019inflation} claims to provide two equivalent definitions of an inflation graph. Technically, the two definition proffered by \citet{wolfe2019inflation} are inequivalent. Our first condition in Definition~\ref{def:nonfanoutinflationgraph} corresponds to their second definition. The second condition in Definition~\ref{def:nonfanoutinflationgraph} restricts to \emph{nonfanout} inflations.} Formally, every vertex in $G'$ is indicated by both a \emph{name} and a \emph{copy index}. The name must correspond to a vertex name in $G$, and the copy index can be any integer. Define the deflation operation, $\Deflate$, as the quotient map that maps vertices of $G'$ to vertices in $G$ by dropping their copy indices. Then, $G'$ is a nonfanout inflation graph of $G$ if and only if:
\begin{enumerate}[leftmargin=*,nosep]
\item \textbf{\textup{Local Matching:}} For any vertex $u'$ in $G'$ (representing either an observed or latent variable), it holds that $\Deflate(\mathrm{pa}_{G'}(u'))=\mathrm{pa}_{G}(\Deflate(u'))$.
\item \textbf{\textup{Nonfanout:}} For any vertex $w'\in W'$ representing a \emph{latent} variable in $G'$, it holds that its child vertices, $\mathrm{ch}_{G'}(w')$, are \emph{duplicate-free} under the $\Deflate$ operation. That is, no two children of $w'$ can have the same name even if they have different copy indices.
\end{enumerate}
\end{definition}

We depict a nonfanout inflation graph with respect to the triangle scenario in Figure~\ref{fig:Triangle}(b). In the nonfanout inflation graph, we introduce another copy of the latent variable $\omega_{CA}$ in the original structure and label the two copies as $\omega_{CA}^{1}$ and $\omega_{CA}^{2}$, respectively. Note that, different from the visible-vertex-split graph and maximal interruption graph, there might be infinitely many nonfanout inflation graphs for a causal DAG.

Similarly to the construction of the causal models $\VVS(G)$ and $\MI(G)$, we shall define a causal model from the nonfanout inflation graphs via a projection operation. For this purpose, we first introduce the definition of \emph{Injectable Set}.

\begin{definition}[\textbf{Injectable Set}]
Consider a nonfanout inflation graph $G'(V',W',\mathcal{E}')$, where $V'$ is the set of vertices representing observed variables, $W'$ is the set of vertices representing latent variables, and $\mathcal{E}'$ is the edge set, and where each vertex has both a name and a copy index. 
A subset of the visible vertices $V^{\prime\star}\subseteq V'$ is said to comprise an injectable set within the vertices of $G'$, if the induced subgraph of $G'$ corresponding to $V^{\prime\star}$ and its ancestors gets mapped to the corresponding induced-subgraph of $G$ over $\Deflate(V^{\prime\star})$ and \emph{its} ancestors therein when the operation $\Deflate$ is applied to the entire subgraph. That is, the causal history of $V^{\prime\star}$ in $G'$ is identical to the causal history behind $\Deflate(V^{\prime\star})$ in the original (uninflated) graph $G$. 
\label{Def:Injectabl}
\end{definition}

The deflation operation is similar to the projection operations in defining the visible-vertex-splitting and maximal interruption models. We have a similar result as Propositions \ref{prop:projection} and \ref{prop:projectionMI}, which relates the correlations over the observable variables in a causal DAG to the correlations in its nonfanout inflation graphs.

\begin{proposition}\label{prop:nonfanoutinflation}
 Let $G(V,W,\mathcal{E})$ be a causal DAG, and let $G'(V',W',\mathcal{E}')$ be any nonfanout inflation of $G$. 
 Given a GPT $\mathcal{G}$, a distribution $p(X_V)$ is within $\mathcal{G}(G)$ only if there exists some distribution  $q(X_{V'})$ over the observed vertices of the nonfanout inflation graph $G'$ that is within $\mathcal{G}(G')$, such that for every injectable set $V^{\prime\star}$, the marginal distribution of $q$ agrees with the corresponding marginal distribution of $p$. That is, $p\left(X_{\Deflate(V^{\prime\star})}\right)=q(X_{V^{\prime\star}})$.
 \end{proposition}
 
 Of course, it must follow that for every GPT $\mathcal{G}$, if we recycle the causal components of the GPT model on $G$ to define a GPT causal model on the inflation graph $G'$, it will also hold that $q(X_{V'})\in \mathcal{G}(G')$ is in the set of ordinary Markov models for the inflation graph $G'$; and if $V^{\prime 1}$ and  $V^{\prime 2}$ are two (ordered) subsets of $V'$ that are causally copy-isomorphic to one another relative to $G'$, then the two marginal distributions of $q$ on the corresponding variables must agree. That is, $q(X_{V^{\prime 1}})=q(X_{V^{\prime 2}})$. Causally copy-isomorphic subgraphs are gedankenexperiments constructed by wiring the same types of devices in the same way, and thus they must realize the same marginal distributions. For brevity, we defer the formal definition of causally copy isomorphism to Appendix~\ref{app:defcopyisomorphism}.

Marginal distribution consistency over the variables of the injectable set as well as across the variables of causally copy-isomorphic subgraphs follows from positing \emph{some} physical realization of all the system states, transformations, and measurements. Essentially, if two vertex sets have indistinguishable causal pasts, they must give rise to the same distribution over their variables. Causally copy-isomorphisms naturally generalize the idea of an injectable set: whereas injectable sets relate vertices of the original graph to vertices of the inflation graph under ancestral closure and the $\Deflate$ map, copy-isomorphisms analogously relate pairs of vertex sets within the inflation graph to each other. See Ref.~\cite[Section V-D]{wolfe2019inflation} for an extended discussion relating these axioms of nonfanout inflation to consequences of a GPT realization. 

Proposition~\ref{prop:nonfanoutinflation} inspires us to define a GPT-agnostic model relative to any particular inflation graph. As a prerequisite, we introduce the notion of the inflation test.

\begin{definition}\label{defn:nonfanoutinflation}
 Let $G(V,W,\mathcal{E})$ be a causal DAG, and let $G'(V',W',\mathcal{E}')$ be any nonfanout inflation of $G$. A distribution $p(X_V)$ is said to \textbf{pass the inflation test} with respect to the nonfanout inflation graph $G'$ only if there exists some distribution $q(X_{V'})$ over the observed vertices of the nonfanout inflation graph $G'$ such that
 \begin{enumerate}[leftmargin=*,nosep]
\item For every injectable set $V^{\prime\star}$, the marginal distribution of $q$ agrees with the corresponding marginal distribution of $p$. That is, $p\left(X_{\Deflate(V^{\prime\star})}\right)=q(X_{V^{\prime\star}})$. 
\item $q$ is in the set of ordinary Markov models for $G'$, namely, that $q$ exhibits a conditional independence condition corresponding to each $d$-separation relation in $G'$ that pertains exclusively to vertices within $V'$. 
\item If $V^{\prime 1}$ and  $V^{\prime 2}$ are two (ordered) subsets of $V'$ that are causally copy-isomorphic to one another relative to $G'$, then the two marginal distributions of $q$ on the corresponding variables must agree. That is, $q(X_{V^{\prime 1}})=q(X_{V^{\prime 2}})$.
\end{enumerate}

\end{definition}

As an example of how the inflation test unveils constraints to physically realizable distributions in a causal DAG, we show why the distribution in Eq.~\eqref{eq:synchr} is not implementable by the triangle scenario. Consider the nanfanout inflation graph shown in Figure~\ref{fig:Triangle}(b). In this inflation graph, we observe the $d$-separation between $A^{1}$ and $C^{1}$, which should indicate the independence between the two random variables. However, the distribution in Eq.~\eqref{eq:synchr} suggests a perfectly synchronous correlation between them, leading to a contradiction.

Nonfanout inflation has been used to great effect by physicists in showing that large classes of multipartite entangled states cannot be realized by quantum processes limited to bipartite sources and restricted cross-site communication~\cite{CoiteuxRoy2021,Hansenne2022,Makuta2023,Wang_2024}. We encourage the reader to peruse those references for plentiful examples of nonfanout inflation.
For our discussion of investigating the most generally possible physical theory, we pair the idea of passing the nonfanout inflation test with the previous model based on maximal interruption, which brings us to the final model in this work.

\begin{definition}[The Maximal Interruption Plus Nonfanout Inflation Model]
   Given a causal DAG $G$, let $\MaxInt(G)$ be its maximal interruption graph, and let $G''$ be some nonfanout inflation of $\MaxInt(G)$.\footnote{To be clear: $G''$ is \emph{not} a nonfanout inflation of the original DAG $G$, but rather, $G''$ is a nonfanout inflation of the maximal interruption graph of $G$.} Now, a model for distributions on $G$ can be defined \emph{relative} to the specific nonfanout inflation graph $G''$ as $\Proj\bigl(q \text{ on }\MaxInt(G)\bigr)$, where $q$ is restricted to all distributions passing the inflation test with respect to \(G''\). 
   
   The \textbf{maximal interruption plus nonfanout inflation} model, $\MINF(G)$, is defined analogously, but instead of merely with respect to one particular $G''$, the distribution $q$ that ultimately projects to $p$ is required to pass the nonfanout inflation test with respect to every possible nonfanout inflation of $\MaxInt(G)$.
\end{definition}

By construction and Proposition~\ref{prop:nonfanoutinflation}, we have the following result.

\begin{corollary}
 Let $G$ be any causal DAG.
 Then, $\GPT(G)\subseteq\MINF(G)$.
\end{corollary}

In general, by resorting to the nonfanout inflation graphs of $\MaxInt(G)$, the model strengthens its constraints relative to merely requiring that $q$ be ordinary Markov with respect to $\MaxInt(G)$ as per Definition~\ref{def:MIset}. 

Admittedly, there are some graphs where $\MINF(G)=\MI(G)$. For instance, any graph with a single latent vertex is of that kind. 
For nonfanout inflation to imply constraints \emph{beyond} the ones implied by maximal interruption, one would have to construct at least one weakly-connected nonfanout inflation graph (of the maximal interruption graph) that is not ``trivial." An inflation graphs is trivial if just one injectable set comprises the entirety of the graph's observed vertices. In such a case, every distribution relative to the maximal interruption graph can be immediately recycled to pertain to the observed vertices of the trivial inflation graph. If the distribution is Markov with respect to the interruption graph, then its corresponding image on the trivial inflation graph will automatically pass the inflation test as per Definition~\ref{defn:nonfanoutinflation} by construction. Note that a trivial inflation graph cannot have any copy-isomorphic pairs of vertices: since the original graph has only one vertex for each name, the injectabilility of the entire set of observed vertices implies no pair of vertices with the same names but different copy indices. Therefore, DAGs where no nontrivial inflations can be constructed satisfy $\MINF(G)=\MI(G)$.

\section{Discussion}
In summary, we start with the study of the physical interpretation of the nested Markov model, an algebraic approach to causality characterization. For this purpose, we look into GPTs, where latent variables may represent generalized physical states subjected to certain probabilistic rules and relativity principles. 
We show that while GPT causal theories comply with the nested Markov model, there exist causal structures in which the nested Markov model allows ``non-physical'' probability distributions among observable variables.
To summarize our findings, we present the following theorem.
\begin{theorem}%
    For any causal structure denoted as DAG $G$, the following inclusions always hold:
    \begin{align}\label{eq:main}
    \begin{split}
        \C(G)&\subseteq\GPT(G)\\
        &\subseteq\MINF(G)\\
        &\subseteq\MI(G)\\
        &\subseteq\VVS(G)\\
        &\subseteq\NEST(G)\\
        &\subseteq\IND(G).
    \end{split}
    \end{align}
    Moreover, there exists causal structures illustrating the strictness of the inclusions for all except $\GPT(G)\subseteq\MINF(G)$.
    \label{thm:main}
\end{theorem}

In Table~\ref{table:constraints}, we compare the constraints in the causal models $\NEST(G)$, $\VVS(G)$, $\MI(G)$, and $\MINF(G)$. Indeed, the Zariski closures of these models all coincide, and they completely characterize all the equality constraints on the GPT-realizable distributions.
Nevertheless, the models differ in inequality constraints. In the first level, the nested Markov model $\NEST(G)$ does not specify any inequality constraint and is thus the loosest set among these models. To define the models $\VVS(G)$, $\MI(G)$, and $\MINF(G)$, we first construct extended hypergraphs and then apply particular projection principles. The projection principles defining these models are successively becoming more restrictive, leading to more types of non-trivial inequality constraints and approaching the accurate characterization of ``physical requirements'' for an observable distribution in a causal structure. Unlike $\C$ or $\GPT$, all the models proposed in this paper are defined by statistically testable constraints derived from the graph structure, following a specific constraint-generating prescription. The $\MINF$ model is the tightest model we were able to envision that is defined by such constraints, while still admitting \emph{all} OPT-compatible distributions.

\begin{table}[hbt!]
\centering
    \begin{tabular}{c|cccc}
    \hline
    \hline
       \diagbox{\thead{\footnotesize Constraints}}{\thead{\footnotesize Model}} & {\footnotesize $\NEST(G)$} & {\footnotesize $\VVS(G)$} & {\footnotesize $\MI(G)$} & {\footnotesize $\MINF(G)$} \\
       \hline
       {\footnotesize equalities} & $\checkmark$ & $\checkmark$ & $\checkmark$ & $\checkmark$  \\
       \hline
       {\footnotesize e-separation inequalities}
       & $\cross$ & $\checkmark$ & $\checkmark$ & $\checkmark$ \\
       \hline
       {\footnotesize monogamy of nonlocality}
       & $\cross$ & $\cross$ & $\checkmark$ & $\checkmark$  \\
       \hline
       {\footnotesize no shared bit for triangle}
       & $\cross$ & $\cross$ & $\cross$ & $\checkmark$  \\
       \hline
       \hline
    \end{tabular}
    \caption{Comparison between different causal models considered in this work. The four causal models, $\NEST(G),\VVS(G),\MI(G)$, and $\MINF(G)$, successively give tighter characterizations of the set of valid distributions for a causal DAG. In the table, we tick the constraints that each model can recover.}
    \label{table:constraints}
\end{table}

A remaining open problem is whether there exist any distributions in $\MINF(G)\backslash\GPT(G)$. Nevertheless, we hope that the construction of $\MINF(G)$ can help clarify the notion of causality and improve our understanding of the origin of the gap between algebraically and physically motivated causal models.

The reader may wonder if it is possible to illustrate multiple strict model separations by considering a single DAG. Indeed, this is is the case. There are plenty of DAGs without \emph{any} ordinary or nested Markov constraints where nevertheless separations have been established between classical distributions versus those admitting a GPT realization. The most established classical-GPT gap of all time is that of the CHSH Bell test of Figure~\ref{fig:Bell}(a). That DAG, however, can only illustrate the single separation $\C(G)\subsetneq\GPT(G)$, as Bell scenarios have the relatively \emph{exceptional} property that $\GPT(G)=\IND(G)$.
Consider, however, the classical-GPT gap in the Instrumental scenario of Figure~\ref{fig:Bell}(b) as established by~\citet{VanHimbeeck2019}. That scenario, therefore, simultaneously establishes $\C(G)\subsetneq\VVS(G)\subsetneq\NEST(G)=\IND(G)$. The same separations can be illustrated while restricting all variables to have binary cardinality in the \emph{Unrelated Confounders} scenario, see Refs.~\cite[Eq. (38)]{lauand2023witnessing} and~\cite{lauand2024quantum}. We depict its causal DAG and associated visible-vertex-split graph in Figure~\ref{fig:UCScenario}. The existence of nonclassical correlations in the triangle scenario of Figure~\ref{fig:Triangle}(a) was established by~\citet{fritz2012beyond}. Since there is nothing to interrupt in the triangle scenario, it follows that $\C(G)\subsetneq\MINF(G)\subsetneq\MI(G)=\IND(G)$ is evident there.
Besides these causal DAGs, we provide our own example of two separations, $\C(G)\subsetneq\GPT(G)\subsetneq\NEST(G)$, which is inspired by the Guess-Your-Neighbor's-Input nonlocal game \cite{almeida2010guess}. We show the causal DAG and prove the separations in Appendix~\ref{app:GNYI}.

\begin{figure}[hbt!]
 \centering \includegraphics[width=\columnwidth]{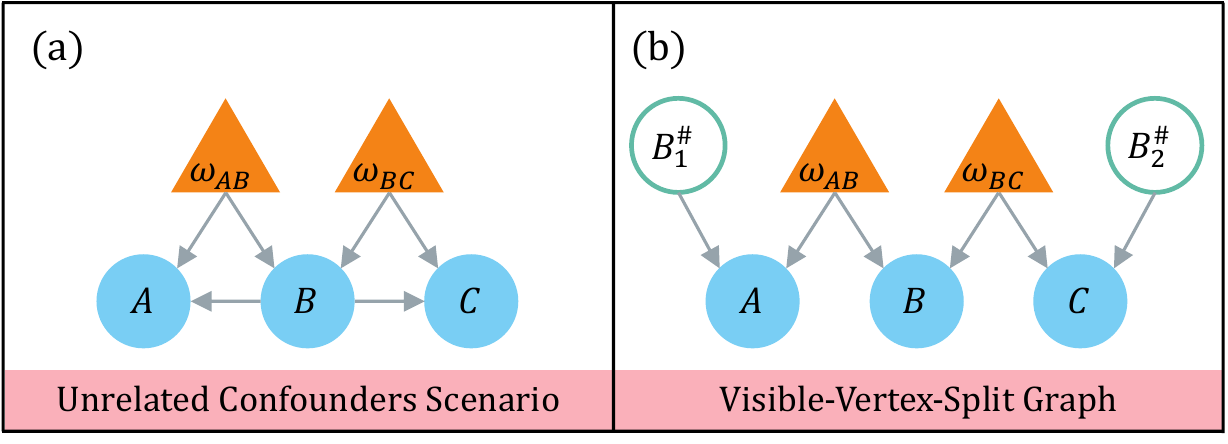}
\caption{Analysis for the unrelated confounders scenario. (a) The causal structure of the scenario. The circles represent observable variables, including $A,B$, and $C$, and the triangles represent the latent variables, $\omega_{AB}$ and $\omega_{BC}$. 
(b) The visible-vertex-split graph of (a). The green hollow circles are added vertices in the hypergraph, and a data post-selection of $B_1^{\#}{=}B_2^{\#}{=}B$ projects the structure back to the original scenario.}
\label{fig:UCScenario}
\end{figure}

\appendix
\section{Preliminaries}
\subsection{Generalized causality descriptions}\label{app:DAG}
In classical causal theories formalized as Bayesian networks, all the causal influences are modelled as conditional probabilities between random variables. In a GPT theory as per the assumptions in Box~\ref{box:GPTAssumption}, causal influences between solely observed variables are the same as Bayesian networks. Differently, the causal influence from a latent variable on an observed variable should be understood as the effect of a measurement, and the causal influence from a latent variable on another latent variable should be understood as a state transformation.
Here, we take the quantum theory as an example and describe the physical interpretation of causal DAGs.
In quantum theory, a general state transformation can be characterized as a quantum instrument~\cite{davies1970operational,ozawa1984quantum}.
In the description of quantum causal influences, first consider a simple case where one observable variable, $X$, is influenced by one latent variable. The latent variable provides an input state, $\rho$, to the quantum instrument. The quantum instrument takes a measurement on $\rho$, which is characterized as a POVM $\{M_x\}_x$, and outputs the measurement outcome $X$. The causal influence from $\rho$ to $X$ is reflected from the outcome probability, where $X=x$ is output with probability $p(x)=\tr(\rho M_x)$. With respect to the measurement outcome $x$, the quantum instrument also outputs a post-measurement system in the state $E_x\rho E_x^\dag/\tr(\rho E_x^\dag E_x)$, where $E_x$ is called a Kraus operator, satisfying $E_x^\dag E_x=M_x$ and $\sum_x M_x=I$.
Next, if the observed variable $X$ is also influenced by another observed variable $A$, $A$ acts as a classical control, such that $p(X{=}x|A{=}a)=\tr(\rho M^a_x)$, where each value of $A{=}a$ corresponds to a quantum instrument with POVM $\{M^a_x\}_x$.
Finally, in the case of a state transformation from a latent variable $\rho$ to another latent variable, the quantum instrument is a completely positive and trace-preserving map, which maps $\rho$ to another quantum state in a state $\sum_xE_x\rho E_x^\dag/\tr(\rho E_x^\dag E_x)$ for some set of Kraus operators.

\subsection{Operational-probabilistic theory}
Besides describing the observable events and probabilities, a GPT needs to explicitly specify the sets of valid physical states and state transformations. For the discussion of the nested Markov model, it is helpful to consider a more general framework, the operational-probabilistic theory (OPT)~\cite{chiribella2010probabilistic}, which does not require an explicit description of these elements. Some of our proofs will be based on the OPT framework. Note that every GPT is an OPT; hence results that hold for a general OPT are naturally valid for a general GPT. Nevertheless, it is an open question whether every OPT can be explained by a GPT.

We first review the OPT framework.
The first part of the framework is the definition of operations. A general physical operation is called a ``test,'' which generalizes the concept of quantum instrument in quantum information. The input system of the test carries a physical state in the OPT. After measuring the state, the test outputs a classical variable as the measurement outcome. In addition, the test also outputs an associated post-measurement state of the system. After specifying the operations, the framework describes the measurement outcomes in a probabilistic manner.
In general, the measurement outcome is a random variable, corresponding to the concept of ``events'' in probability theory. When two tests lead to the same probability distribution over the same set of events and the replacement of one test with another does not affect the input and output systems, they are said to have the same ``effect.'' If the measurement outcome can take only one possible value, the test is called deterministic. In the OPTs that we consider, we assume the deterministic effect to be unique. This assumption is equivalent to the assumption of \emph{no-signalling without interaction} of Box~\ref{box:GPTAssumption} \cite{chiribella2010probabilistic}.

To apply the OPT framework to the causal description in terms of the observed probability distributions in causal DAGs, we need to properly interpret the physical meanings of vertices and edges in the DAG. In particular, we hope the description is consistent with the $d$-separation criteria when the causal DAG is considered as a Bayesian network. For this purpose, we utilize the interpretation introduced by \citet{henson2014theory}. Under this interpretation, each vertex in the DAG, together with its incoming and outgoing edges, is described as a test, where the vertex loads the measurement outcome, the incoming edges load the input systems, and the outgoing edges load the output systems. If a vertex corresponds to an observed variable, we take it as a test with a non-trivial measurement outcome but a trivial output system; namely, only the measurement outcome is recorded, and the post-measurement system is discarded. If a vertex corresponds to a latent variable, we take it as a test with an output system in an OPT state but a trivial measurement outcome. Following the same notation convention as~\citet{henson2014theory}, we assign a fixed value $\Lambda\equiv1$ to a latent variable $\Lambda$ for completeness; hence every vertex in a causal DAG is assigned with a classical variable.

For vertex $v$, denote its parental vertices as $\mathrm{pa}(v)$, the joint system of its incoming edges as $\mathrm{inc}(v)$, and the joint system of its outgoing edges as $\mathrm{out}(v)$. To describe the causal influences from $\mathrm{pa}(v)$ on $v$, we record both the variable values of the vertices and the physical systems in a function
\begin{equation}
    \mathcal{T}\left(X_v|X_{\mathrm{pa}(v)}\right)^{\mathrm{inc}(v)}_{\mathrm{out}(v)}:\mathcal{X}_v\times\mathcal{X}_{\mathrm{pa}(v)}\rightarrow\mathbb{R},
\end{equation}
where $X_v$ and $X_{\mathrm{pa}(v)}$ represent the associated variables of $v$ and $\mathrm{pa}(v)$, $\mathcal{X}_v$ and $\mathcal{X}_{\mathrm{pa}(v)}$ represent their event spaces, and $\mathrm{inc}(v)$ and $\mathrm{out}(v)$ represent the incoming and outgoing systems of the test, respectively. Note that $X_{\mathrm{pa}(v)}$ might be a joint random variable that corresponds to multiple vertices.
In addition, if $v$ is a root vertex such that $\mathrm{pa}(v)=\emptyset,\mathrm{inc}(v)=\emptyset$, we may write the function simply as $\mathcal{T}\left(X_v\right)_{\mathrm{out}(v)}$.
According to the OPT formulation, the above function can be taken as a probability kernel over $\mathcal{X}_v$ given $\mathcal{X}_{\mathrm{pa}(v)}$.

Note that the kernel function calculation depends on the underlying input system $\mathrm{inc}(v)$ and the output system $\mathrm{out}(v)$. If the incoming or the outgoing system is trivial, we omit the corresponding notation. If the parental and child vertices both correspond to observed variables, $\mathrm{inc}(v)$ and $\mathrm{out}(v)$ become trivial systems, and $\mathcal{T}\left(X_v|X_{\mathrm{pa}(v)}\right)^{\mathrm{inc}(v)}_{\mathrm{out}(v)}$ degenerates to the usual conditional probability of $X_v$ given $X_{\mathrm{pa}(v)}$. For simplicity, we use the usual probability notations in this case. In addition, if a vertex is childless, we can determine the marginal distribution for the variables of other vertices by replacing this childless vertex with the deterministic effect. For a given probability kernel, the uniqueness of the deterministic effect guarantees the uniqueness of the marginal distribution~\cite{chiribella2010probabilistic,henson2014theory}. In later discussions, we sometimes use names of vertices and their corresponding variables interchangeably for brevity.

\subsection{Markov condition and $d$-separation}
As shown by \citet{hensen2015loophole}, we can generalize the Markov condition in a classical Bayesian network to a causal DAG interpreted by an OPT. 
Given a causal DAG $G$ with $m$ vertices, $v_1,\cdots,v_m$, where the first $n$ vertices $v_1,\cdots,v_n$ represent observed variables $X_{v_1},\cdots,X_{v_n}$.
A probability kernel $p$ over the observed variables is said to be generalized Markov with respect to $G$ in an OPT if there exists a decomposition of
\begin{equation}
    p(X_{v_1}{=}x_{v_1},\cdots,X_{v_n}{=}x_{v_n}) = \prod_{i{=}1}^{m}\mathcal{T}\left(x_{v_i}|x_{\mathrm{pa}_G(v_i)}\right)^{\mathrm{inc}(v_i)}_{\mathrm{out}(v_i)},
\end{equation}
where the function $\mathcal{T}$ is defined by the physical rules in the OPT.
Remember that we take the convention of $X_{v_i}\equiv1$ for $i=n{+}1,\cdots,m$.
For instance, in Figure~\ref{fig:Verma}(a), we have
\begin{equation}\label{eq:MadiationDist}
\begin{split}
    &p(X{=}x,A{=}a,B{=}b,C{=}c,\Lambda{=}1) \\ =&p(x)\mathcal{T}(a|x)^{\mathcal{H}_A}p(b|a)\mathcal{T}(c|b)^{\mathcal{H}_C}\mathcal{T}(\Lambda=1)_{\mathcal{H}_{AC}}.
\end{split}
\end{equation}
Note that the term $\mathcal{T}(a|x)^{\mathcal{H}_A},\mathcal{T}(c|b)^{\mathcal{H}_C},\mathcal{T}(\Lambda=1)_{\mathcal{H}_{AC}}$ are not independent of each other, as is reflected from the dependence on the non-trivial incoming or outgoing systems. Here, the trivial classical variable $\Lambda\equiv1$ is assigned to the vertex of the latent variable, and we borrow the notation for Hilbert spaces in quantum theory to denote the systems, which are $\mathcal{H}_{AC}$ for the output system of $\Lambda$, $\mathcal{H}_A$ for the input system of $A$, and $\mathcal{H}_C$ for the input system of $C$. The probability kernel decomposition captures the essence of the Markov condition, where the probability distribution of each vertex is only directly influenced by its parental vertices. Thus, we say $p(X,A,B,C,\Lambda)$ is generalized Markov.

In classical Bayesian networks, the conditional independences among observed variables are equivalent to the graphical criteria of $d$-separation~\cite{pearl1995causal}. Namely, two observed variables $X_{v_i}$ and $X_{v_j}$ are independent conditioned on a set of variables $X_{v_{k_1}},\cdots,X_{v_{k_m}}$ if and only if the corresponding vertices of $v_i$ and $v_j$ are $d$-separated by the set of vertices $v_{k_1},\cdots,v_{k_m}$. For convenience of later discussions, we briefly review the $d$-separation criteria here. Consider a DAG with vertices $V = \{v_1,\cdots,v_n\}$. We say that two vertices $v_i, v_j$ are $d$-separated by a set of vertices $W \subseteq V \setminus \{v_i, v_j\}$ ($W$ can be an empty set) if for any path $v_i =: v_{k_1}, \ldots, v_{k_m} := v_j$ that links $v_i$ and $v_j$ in $G$, it holds that
\begin{enumerate}[leftmargin=*,nosep]
    \item If there are ``colliders'' in the path, namely, the kind of vertex $v_{k_{m'}}$ such that both $v_{k_{m' - 1}}$ and $v_{k_{m' + 1}}$ are connected to $v_{k_{m'}}$ through an edge pointing towards $v_{k_{m'}}$, i.e., ${v_{k_{m' - 1}} \to v_{k_{m'}} \leftarrow v_{k_{m' + 1}}}$, it holds that there exists at least one collider that both the collider itself and its descendents do not belong to $W$;
    \item Denoting $V'$ as the set of vertices that are non-colliders in the path other than $v_{k_1}$ and $v_{k_m}$, then we have that either $V' = \emptyset$ or $V' \cap W \neq \emptyset$.
\end{enumerate}

Furthermore, we can generalize the $d$-separation criteria to sets of vertices. Vertex sets $A$ and $B$ are $d$-separated by the vertex set $C$ if for any pair of vertices $v\in A$ and $u\in B$, they are $d$-separated by the vertex set $C$.

A remarkable property of the generalized Markov condition is that it preserves the equivalence between (conditional) independences and (conditional) $d$-separation in OPTs~\cite{henson2014theory}.
For instance, in Figure~\ref{fig:Verma}(a), as the causal influence of $X$ on $B$ is mediated by $A$, we have the independence condition of $X\independent B|A$ shown by the following kernel decomposition:
\begin{equation}
\begin{split}
    p(x,a,b)&=\sum_{C=c}p(X{=}x,A{=}a,B{=}b,C{=}c) \\
    &=
    p(x)\mathcal{T}(a|x)^{\mathcal{H}_A}p(b|a).
\end{split}
\end{equation}
The marginalization over $C$ is valid since it is childless in the DAG. Graphically, the conditional independence can be shown by blocking the node $A$ in the original directed path of $X\rightarrow A\rightarrow B$, the conditional $d$-separation criterion.

\subsection{Nested Markov property}\label{app:NestedDef}
In this section, we review the definition of the nested Markov property. As shown by the example in Sec.~\ref{Sec:Nested}, we take a graph-based approach to the definition. Consider a causal DAG $G(V,W,\mathcal{E})$, where $V$ is the set of vertices representing observed variables, $W$ is the set of vertices representing latent variables, and $\mathcal{E}$ is the edge set. 
Depending on the probability queries used in the analysis, we may take a set of vertices in $V$, $F\subseteq V$, as fixed, which correspond to fixing the values of the variables they represent. The set of vertices $F$ can be chosen freely; nevertheless, such vertices must be parentless. The other vertices, $R=V\backslash F$, are called random.
Following \citet{tian2002general,evans2017margins}, we decompose $R$ into districts. First, we say two vertices $v_i,v_j\in R$ are linked by a bi-directed path, $v_i\leftarrow w_k\cdots w_l\rightarrow v_j$, such that $w_k,w_l\in W$ and there is a path between $w_k$ and $w_l$ going strictly within $W$. Then, a district, $D$, is defined as the maximal inclusion set of vertices in $R$ representing observed variables that are linked by bi-directed paths. 
Given district $D$, its associated subgraph $G[D]$ is defined as a causal DAG, of which 
\begin{itemize}[leftmargin=*,parsep=0pt]
    \item The set of vertices representing observed variables is given by $D$ and $\mathrm{pa}(D)\backslash D\cap V$, with the vertices in $D$ taken as random and the vertices in $\mathrm{pa}(D)\backslash D\cap V$  taken as fixed;
    \item The set of vertices representing latent variables is given by the maximal inclusion set of vertices in $W$ that is on a bi-directed path linking two vertices in $D$;
    \item The edge set contains the directed edges $u\rightarrow v$, where $v\in D$ and $u\in\mathrm{pa}(D)\backslash D\cap V$, and all the directed edges on the bi-directed paths. 
\end{itemize}
Based on these notions, we are ready to define the nested Markov property.

\begin{definition}[Nested Markov property, Definition~3.1 in~\citet{evans2017margins}]
    Given a causal DAG $G(V,W,\mathcal{E})$ with $V=R\cup F,R\cap F=\emptyset$, where $R$ and $F$ are the sets of vertices that are random and fixed, respectively, suppose a distribution $p$ defines a valid probability kernel over the state space of $R$, $\mathcal{X}_{R}$, indexed by $\mathcal{X}_{F}$. Then, it obeys the nested Markov property with respect to $G(V,W,\mathcal{E})$ if $R=\emptyset$, or both

\begin{enumerate}[leftmargin=*]
  \item $p$ can be factorized over districts $D_1,\cdots,D_l$ of $G$:
    \begin{equation}\label{eq:districtdecomp}
  p(x_R|x_F)=\prod_{i=1}^{l}g_i(x_{D_i}|x_{\mathrm{pa}(D_i)\backslash D_i\cap V}),
    \end{equation}
where every $g_i$ is a probability kernel obeying the nested Markov property with respect to the subgraph $G[D_i]$ if $l\geq 2$ or $F\backslash\mathrm{pa}_{G}(R)\neq\emptyset$,
  
  \item $\forall v\in R$ such that $\mathrm{ch}_{G}(v)=\emptyset$, the marginal kernel over the state space of $R\backslash v$ indexed by $\mathcal{X}_F$,
    \begin{equation}\label{eq:vermamarginal}
      p(x_{R\backslash v}|x_F)=\sum_{x_v}p(x_R|x_F),
    \end{equation}
    obeys the nested Markov property with respect to the subgraph $G[R\backslash\{v\}]$.
\end{enumerate}
\label{def:nested}
\end{definition}

As can be seen from the graphical approach to decomposing the probability distribution with respect to its associated causal DAG, in brief, a probability distribution having the nested Markov property is such a distribution that its identifiable probability kernels satisfy the following property: a $d$-separation condition among vertices representing observed variables implies conditional-independence among these observed variables.

We note that Eq.~\eqref{eq:vermamarginal} defines the nested Markov property in a recursive manner: After the marginalization over a childless free vertex $v$ in Eq.~\eqref{eq:vermamarginal}, for the resulted subgraph and its associated probability kernel, repeat the two steps in Eq.~\eqref{eq:districtdecomp} and~\eqref{eq:vermamarginal}. Particularly, a DAG might contain multiple childless free vertices, and we need to marginalize each of them and apply the same procedure to each resulted subgraph separately. This definition naturally defines a procedure to search for nested Markov constraints similar to the Verma constraint, which ends for a subgraph if it only contains a single random vertex. 
In the marginalization step of Eq.~\eqref{eq:vermamarginal}, if we find the marginal kernel, $\sum_{x_v}p(x_R|x_F)$, is independent of the variable values of a strict subset of the fixed vertices, $F'\subsetneq F$, then we find a nested Markov constraint similar to Eq.~\eqref{eq:Verma}.

\section{Proof of Proposition~\ref{prop:already_seminetwork}}\label{app:prop1}
Proposition~\ref{prop:already_seminetwork} provides a prerequisite for the existence of a non-trivial nested Markov constraint. The contraposition of Proposition~\ref{prop:already_seminetwork} is that for a DAG $G$ to have non-trivial nested Markov constraints, a necessary condition is the existence of a triplet of vertices that are strictly topologically ordered, ${x\rightarrow v\rightarrow y}$, with ${v\in V}$ representing an observed variable.

To prove the above statement, we inspect the definition of the nested Markov property in Definition~\ref{def:nested}. Suppose there is a non-trivial nested Markov constraint in a causal DAG $G$. Then, in the probability kernel factorization in Eq.~\eqref{eq:districtdecomp}, there exists a district $D_i$ such that the kernel $g_i(x_{D_i}|x_{\mathrm{pa}(D_i)\backslash D_i\cap V})$ cannot be specified by a usual Markov constraint in the original DAG $G$. That is, there exists vertices $v,u\in D_i$, such that one of the child vertices of $v$, $y\in\mathrm{ch}(v)$, belongs to $\mathrm{pa}(u)\backslash D_i\cap V$; and there exists a parent vertex of $v$, $x\in\mathrm{pa}(v)$, belonging to $\mathrm{pa}(v)\backslash D_i\cap V$. Nevertheless, this indicates a strictly topological order of $x\rightarrow v\rightarrow y$, and all these three vertices represent observed variables. This fact contradicts the definition of (semi-) network-type DAGs. Therefore, Proposition~\ref{prop:already_seminetwork} is proved.

\begin{figure}[hbt!]
\centering 
 \includegraphics[width=0.65\columnwidth]{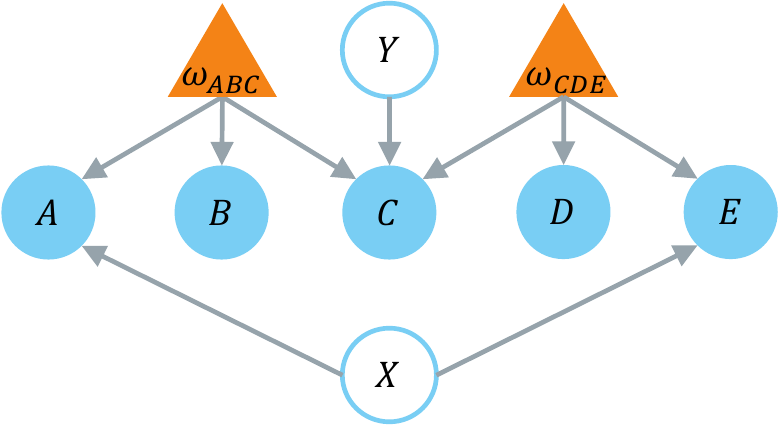}
\caption{An example of a semi-network-type causal DAG. The DAG contains vertices representing observed variables, $X,Y,A,B,C,D,E$, and vertices representing latent variables, $\omega_{ABC},\omega_{CDE}$. Taking all the vertices representing observed variables as free, the free vertices can be divided into three districts: $\{X\},\{Y\}$, and $\{A,B,C,D,E\}$.}
\label{fig:example}
\end{figure}

\section{Markov property in GPTs}\label{app:nested}
To understand the proof of Theorem~\ref{thm:nested}, which states that the nested Markov constraints hold valid in all GPTs, we will actually prove only the various lemmas and propositions from which Theorem~\ref{thm:nested} then trivially follows. In particular, we focus on proving Proposition~\ref{prop:projection}, from which Lemma~\ref{lemma:easy_part} almost immediately follows, and on proving Lemma~\ref{lemma:hard_part}. Theorem~\ref{thm:nested} is an immediate consequence of considering Lemmas~\ref{lemma:easy_part}~and~\ref{lemma:hard_part} together.

\subsection{Proof of Propositions~\ref{prop:projection} and \ref{prop:projectionMI}}\label{app:prop2}

Consider a causal DAG $G(V,W,\mathcal{E})$, where $V$ is the set of vertices representing observed variables, $W$ is the set of vertices representing latent variables, and $\mathcal{E}$ is the edge set.
Its associated visible-vertex-split graph, $\VVSplit(G)(V\cup V^{\#},W,\mathcal{E}')$, is defined as in Definition~\ref{def:projection}, and its associated maximal-interruption graph, $\MaxInt(G)(V\cup U,W,\mathcal{E}')$, is defined as in Definition~\ref{def:hypergraph}.
In addition, consider a specific GPT $\mathcal{G}$ that satisfies the assumptions in Box~\ref{box:GPTAssumption}.
By definition, we can prove Propositions~\ref{prop:projection} and \ref{prop:projectionMI} concurrently by showing the \emph{if} direction of Proposition~\ref{prop:projection} and proving the \emph{only if} direction of Proposition~\ref{prop:projectionMI}.

\subsubsection{Sufficient condition in Proposition~\ref{prop:projection}}
We first show the \emph{if} direction of Proposition~\ref{prop:projection}, which almost follows directly by definition. Suppose a probability kernel $q(X_V|X_{V^{\#}})$ among observed variables of $V\cup V^{\#}$ in $\VVSplit(G)$ can be generated within $\mathcal{G}$. Note that $\VVSplit(G)$ is a semi-network causal DAG, where every vertex representing an observed variable is either a root or an end. With respect to the underlying GPT $\mathcal{G}$, a root observable maps to a test with trivial input and output systems, whose value controls the test to be performed at the child vertices; an end observed variable maps to a test with a trivial output system, whose value corresponds to the outcome of the measurement on its input system. Now, consider a vertex $v\in V_{\mathrm{split}}$, which represents an observed variable being split, and its associated vertices $\{v_i^{\#}\in V^{\#}\}_i$. In $\VVSplit(G)$, the vertex $v$ is an end vertex, and the vertices $\{v_i^{\#}\in V^{\#}\}_i$ are root vertices. Consequently, the output system of the test that $v$ is mapped to trivially matches the input systems of the tests that $v_i^{\#}$ are mapped to. Therefore, it is physically valid to wire these vertices by plugging the output system of $v$ to the input systems of $v_i^{\#}$ and fix the probability kernel to be $X_{v_i^{\#}}\equiv X_v$. The wiring operation is exactly the projection operation defined in Definition~\ref{def:projection}. Since the operations are physically valid within $\mathcal{G}$, the projected distribution $p(X_V)=\Proj(q(X_V|X_{V^{\#}}))$ is thus a valid distribution.

\subsubsection{Necessary condition in Proposition~\ref{prop:projectionMI}}
Next, we prove the \emph{only if} direction of Proposition~\ref{prop:projectionMI}. That is, given a distribution $p(X_V)$ over the observed variables in $G$ that is admitted by a GPT $\mathcal{G}$, it can be extended to a valid probability kernel $p(X_V|X_U)$ over the observed variables in $\MaxInt(G)$ admitted by $\mathcal{G}$. In constructing the maximal interruption graph,
\begin{enumerate}[leftmargin=*,parsep=0pt]
    \item Suppose $v\in V_{\mathrm{split}}$, which represents an observed variable and has both incoming and outgoing edges. With respect to the underlying GPT $\mathcal{G}$, the test that $v$ is mapped to has a trivial output system, and the random variable represented by $v$, $X_v$, serves as a classical control on the tests to be performed at its child vertices. Hence for an edge $(e:v\rightarrow y)$ outgoing from $v$, after replacing it by an edge originating from a newly-added vertex, $u_{e}\in U$, which is equipped with a classical variable $X_{u_{e}}$ sharing the same sample space as $X_v$, there is not a change in the input system of the test at vertex $y$ in $\MaxInt(G)$. Additionally, the classical control of the test at vertex $y$ in $\MaxInt(G)$ is isomorphic to the original one in $G$. Therefore, we can extend the original $X_v$-controlled test at $y$ in $G$ to the $X_{u_{e}}$-controlled test at $y$ in $\MaxInt(G)$ in the same GPT $\mathcal{G}$. As the output system of $v$ is trivial, there is not a change in the test at $v$.
    
    \item Suppose $v\in V_{\mathrm{rip}}\backslash V_{\mathrm{split}}$, which is a root vertex with multiple child vertices, $v\rightarrow\{y_j\}_j$. Following a similar argument as above, there is not a change in the input system of the test at $y_j$ when the edge $(e_j:v\rightarrow y_j)$ is replaced by an edge that originates from a newly-added vertex, $u_{e_j}\in U$, equipped with a classical variable $X_{u_{e_j}}$ sharing the same sample space as $X_v$. Additionally, the classical control of the test at vertex $y_j$ in $\MaxInt(G)$ is isomorphic to the original one in $G$. Therefore, we can extend the original $X_v$-controlled test at $y_j$ in $G$ to the $X_{u_{e_j}}$-controlled test at $y_j$ in $\MaxInt(G)$ in the same GPT $\mathcal{G}$. As the output system of $v$ is trivial, there is not a change in the test at $v$.
\end{enumerate}

The above arguments establish a valid probability kernel over observed variables in $\MaxInt(G)$, $q(X_V|X_U)$. Then, following the same argument that we show the \emph{if} direction of Proposition~\ref{prop:projection}, we have $\Proj(q(X_V|X_U))=p(X_V)$ in the original graph. This finishes the proof.

\subsection{Proof of Lemma~\ref{lemma:hard_part}}\label{app:lemma1}
\begin{proof}
 Consider a causal DAG $G(V,W,\mathcal{E})$, where $V$ is the set of vertices representing observed variables, $W$ is the set of vertices representing latent variables, and $\mathcal{E}$ is the edge set. We take all the vertices in $V$ as random vertices. In addition, we denote the visible-vertex-split graph of $G$ as $\VVSplit(G)(V\cup V^{\#},W,\mathcal{E}')$, which is obtained according to Definition~\ref{def:SWIG}, where vertices in $V^{\#}$ are added as per the subset of vertices to be split, $V_{\mathrm{split}}\subseteq V$. When denoting the vertices and the variables they represent, we use the same notations as those in the proof of Proposition~\ref{prop:projection} in Appendix~\ref{app:prop2}.
 
 To show $\VVS(G)\subseteq\NEST(G)$, we explain that for every probability distribution $p\in\IND(\VVSplit(G))$, the projected distribution $\Proj(p)$ admits a factorization that respects the district decomposition in $G$ and enjoys the nested Markov property with respect to $G$. For this purpose, we inspect the correspondence between the subgraphs defined by the districts in $G$ and $\VVSplit(G)$ and consequently their associated distributions.
\begin{enumerate}[leftmargin=*]
    \item According to the construction of $\VVSplit(G)$, 
    we do not modify the bi-directed paths in $G$.
    In other words, suppose the vertex set $V$ of $G$ can be decomposed into districts $\{D_i\}_{i=1}^{t}$. Then, the set of vertices in $D_i$ also defines a district in $\VVSplit(G)$. As $\bigcup_{i} D_i=V$, the district partition for $\VVSplit(G)$ within the vertex set $V$ is the same for $G$. In $\VVSplit(G)$, all the newly-added vertices, namely those in the vertex set $V^{\#}$, are root vertices, each of which has one single child vertex. Therefore, each added vertex $v_{i}^{\#}$ defines a single-vertex district in $\VVSplit(G)$.
    
    \item Consider a probability distribution $p\in\IND(\VVSplit(G))=\NEST(\VVSplit(G))$. Then, according to the nested Markov property in Definition~\ref{def:nested}, this distribution has a factorization with respect to the subgraphs:
    \begin{equation}   p=\prod_{i=1}^{t}g_i(x_{D_i}|x_{\mathrm{pa}_{\VVSplit(G)}(D_i)\backslash D_i})p(x_{V^{\#}}),
    \end{equation}
    where for the subgraph $\VVSplit(G)[D_i]$, the conditional probability distribution is given by $g_i(x_{D_i}|x_{\mathrm{pa}_{\VVSplit(G)}(D_i)\backslash D_i})$, and we abbreviate $\prod_{v_k^{\#}\in V^{\#}}p(x_{v_k^{\#}})$ as $p(x_{V^{\#}})$. Furthermore, $g_i$ obeys the nested Markov property with respect to the subgraph $\VVSplit(G)[D_i]$. After the projection, the values of variables in $V^{\#}$ are fixed as $x_{v_k^{\#}}\equiv x_{v_k}$, where $v_k^{\#}$ is an added vertex as per $v_k\in V_{\mathrm{split}}$; hence the probability distribution $p(x_{V^{\#}})$ is fixed. 
    We simply need to check the projected distribution $\Proj(g_i)$ of each district for the examination of the nested Markov property. 
    
    We argue that $\Proj(g_i)$ obeys the nested Markov property with respect to the subgraph $G[D_i]$. We simply need to focus on the vertices in $\mathrm{pa}_{\VVSplit(G)}(D_i)\backslash D_i\cap V^{\#}$ for this statement. Consider a general vertex of such, $v^{\#}$, which is newly added as per $v\in V_{\mathrm{split}}$:
    \begin{enumerate}[leftmargin=*]
        \item If $v\notin D_i$, then the projection operation, i.e., ${x_{v^{\#}}\equiv x_{v}}$, does not change the conditional distribution $g_i$. Its effect is merely a renaming of the term being conditioned on without any change in the distribution. 
        \item If $v\in D_i$, then the projection operation fixes the variable value $x_{v^{\#}}\equiv x_{v}$. Its effect can hence be removed from the terms being conditioned in $g_i$.
    \end{enumerate}
    For both cases, the projected distribution $\Proj(g_i)$ is consistent with the division of free and fixed vertices in $G[D_i]$; hence it obeys the nested Markov property with respect to $G[D_i]$.
    Note that $\NEST(G)$ is defined as that for every probability distribution $q\in\NEST(G)$, it can be factorized as
\begin{equation}
    q=\prod_{i=1}^{t}h_i(x_{D_i}|x_{\mathrm{pa}_{\VVSplit(G)}(D_i)\backslash D_i}),
\end{equation}
and each of $h_i$ obeys the nested Markov property with respect to the subgraph $G[D_i]$. Comparing the factorization of $\Proj(p)$ with respect to this form, we see that $\Proj(p)$ admits a factorization with respect to the districts of $G$ that is in accordance with Eq.~\eqref{eq:districtdecomp}.

\item Finally, by construction, we naturally have that every childless vertex in $G$ remains a childless vertex in $\VVSplit(G)$. This shows that every variable associated with a childless vertex in $G$, which can be marginalized over in Eq.~\eqref{eq:vermamarginal} of Definition~\ref{def:nested}, can also be marginalized over in $\VVSplit(G)$.
\end{enumerate}

The above discussion shows that if $p\in\IND(\VVSplit(G))$, then $\Proj(p)$ obeys the nested Markov property with respect to $G$. Putting everything together, we have $\VVS(G)=\Proj\bigl(\IND(\VVSplit(G))\bigr)\subseteq\NEST(G)$.
\end{proof}

\section{$e$-separation}\label{app:esep}
In the literature, building on the $d$-separation criteria, which provide a graphical approach to specifying the equality constraints of conditional independences, the $e$-separation (extended $d$-separation) criterion has been developed as an approach to finding inequality constraints among observable variables in a Bayesian network~\cite{evans2012graphical,finkelstein2021entropic}. Specifically, consider a causal DAG $G$ with vertex set $V$ and edge set $\mathcal{E}$, and $A,B,C$, and $D$ are disjoint sets of vertices in $V$ representing observed variables. We denote the causal DAG after deletion of $D$, $G_{V\backslash D}$, as the DAG obtained after removing every vertex in $D$ and all of its incoming and outgoing edges.
Then, we say $A$ and $B$ are $e$-separated given $C$ after the deletion of $D$ in $G$, if $A$ and $B$ are $d$-separated by $C$ in $G_{V\backslash D}$. 

To have a non-trivial $e$-separation among observed variables, namely, that one needs to delete a non-empty vertex set $D$ to specify a $d$-separation in the new graph, the deleted vertices must have both incoming and outgoing edges. That is, they belong to $V_{\mathrm{split}}$ in Definition~\ref{def:SWIG}. In Corollary 4.3 of \citet{evans2012graphical}, it has been shown that if there exists a non-trivial $e$-separation and that all the observed variables represented by vertices in $D$ are discrete, then it leads to a non-trivial inequality constraint among observed variables. Theorem 5 of \citet{finkelstein2021entropic} further provides concrete constructions of entropic inequality constraints. This principle for finding inequality constraints was originally developed in the discussion of Bayesian networks. 

Here, we generalize the proof of \citet{evans2012graphical} and show that the $e$-separation criterion remains valid when the latent variables are allowed to be physical states in a general GPT. Therefore, the inequality constraints induced from $e$-separation are inherent to the causal DAG. We simply need to prove the validity of the following theorem within a general GPT, which is the essence and the only part requiring a generalization to GPTs in the entire proof.

\begin{theorem}[Theorem 4 in \citet{finkelstein2021entropic}; modified from Theorem 4.2 in \citet{evans2012graphical}]\label{thm:e-sep}
    Given a causal DAG $G(V,W,\mathcal{E})$, where $A,B,C,D\subseteq V$ are disjoint sets of vertices in $G$ representing observed variables. Suppose $A$ and $B$ are $e$-separated given $C$ after deletion of $D$ in $G$, and that no variable in $C$ is a descendant of any variable in $D$. Then, given a GPT $\mathcal{G}$, for every valid distribution $p$ over $A,B,C,D$ in $G$, there exists a valid distribution $p^*$ over $A,B,C,D,D^\#$ such that 
    \begin{equation}\label{eq:esep1}
    \begin{split}
        &p(A{=}a,B{=}b,D{=}d|C{=}c) \\ =&p^*(A{=}a,B{=}b,D{=}d|C{=}c,D^\#{=}d),
    \end{split}
    \end{equation}
    with $D^\#$ a random variable of the same alphabet as $D$ and $A\independent B|C,D^\#$ in distribution $p^*$. If furthermore no variable in $A$ is a descendant of any in $D$, then there exists a distribution $p^*$ over $A,B,C,D,D^\#$ such that
    \begin{equation}\label{eq:esep2}
    \begin{split}
        &p(B{=}b,D{=}d|A{=}a,C{=}c) \\ = &p^*(B{=}b,D{=}d|A{=}a,C{=}c,D^\#{=}d),
    \end{split}
    \end{equation}
    with $A\independent B|C,D^\#$ in distribution $p^*$.
\end{theorem}

\begin{proof}
    Consider the visible-vertex-split graph of $G$, $\VVSplit(G)\equiv G'(V\cup V^{\#},W,\mathcal{E})$, with $V_{\mathrm{split}}\subseteq V$ the subset of vertices that are split. As observed for non-trivial $e$-separation, we assume $D\subseteq V_{\mathrm{split}}$ without loss of generality. Within GPT $\mathcal{G}$, suppose that the functioning of an effect on an observed variable $X_v$ in terms of its parental vertices remains the same for every $v\in V$. 
    We can hence assign the same marginal distribution over the observed variables in the subgraph of $G'$, which does not contain $D^{\#}$ and its descendant vertices, as the marginal distribution over the observed variables in the subgraph of $G$, which does not contain descendant vertices of $D$. By construction and rules of GPTs, the assigned distribution is valid within $\mathcal{G}$. For simplicity, we denote the subgraphs as $G[V\backslash \mathrm{de(D)}]$ and $G'[V\backslash (D^{\#}\cup\mathrm{de(D^{\#})})]$.

    Now consider the set of distributions over $V\cup V^{\#}$ compatible with the above distribution over the subgraph. Since vertices in $D^{\#}$ are root vertices in $G'$ and represent observed variables, the variables that they represent can take any probability distribution. Additionally, these variables serve as the classical control of an effect in $\mathcal{G}$. Upon the projection operation of $D^{\#}{=}D{=}d$, the distribution over the observed variables in $G'[V\backslash (D^{\#}\cup\mathrm{de(D^{\#})})]$ is not affected. This is because that by construction, the subgraph's composing observed variables are independent of the variables represented by $D^{\#}$ and their descendants, which can be told from $d$-separation. Furthermore, since the functioning of an effect is the same across $G$ and $G'$, the variables or GPT states of the descendant vertices of $D^{\#}$ in $G'$ will have the same effect as their counterpart variables or states of the descendant vertices of $D$ in $G$. Therefore, upon the projection operation, we obtain a valid distribution within $\mathcal{G}$ over the observed variables in $G'$, in which the marginal distribution over the observed variables represented by $V$ is the same as the distribution $p$ over the observed variables in $G$.

    Based on the above construction, since no variable in $C$ is a descendant of any variable in $D$, the distribution over the variables in $C$ is not affected by the values taken by the variables represented by $D^{\#}$ in $G'$. Eq.~\eqref{eq:esep1} hence holds upon the projection operation of $D^{\#}{=}D{=}d$. Because $A$ and $B$ are $e$-separated given $C$ after deletion of $D$ in $G$, by construction of $\VVSplit(G)$, $A$ and $B$ are $d$-separated given $C$ and $D^{\#}$ in $G'$, and thus $A\independent B|C,D^\#$ in distribution $p^*$. For the same reason, we have Eq.~\eqref{eq:esep1} and $A\independent B|C,D^\#$ in distribution $p^*$ if furthermore no variable in $A$ is a descendant of any in $D$.
\end{proof}

Note that Theorem~\ref{thm:e-sep} naturally holds when considering $p\in \IND(G)$ and $p'\in\IND(\VVS(G))$, which can be easily shown using the $d$-separation criterion. 

Notably, not only does Theorem~\ref{thm:e-sep} itself hold for the visible-vertex-split model, but the entropic inequality constraints of Ref.~\cite{finkelstein2021entropic}, derived there as \emph{consequences} of Theorem~\ref{thm:e-sep}, are \emph{also} inviolable by distributions in the visible-vertex-split model. To see this, one need only inspect that the derivation by~\citet{finkelstein2021entropic} of their entropic inequalities, noting that the entropic arguments concern only relations between distributions over variables corresponding to \emph{visible} vertices across $G$ and $G'$. Since the derivation never needs to refer to entropies pertaining to latent variables, it follows that the derived entropic inequalities hold for the entire visible-vertex-split model, and hence for GPTs as well.

\section{Proof of Proposition~\ref{prop:MINF_saturated}}\label{app:prop_one_latent}
  To prove this proposition, it suffices to prove that ${\GPT(\MaxInt(G))=\IND(\MaxInt(G))=\BW(\MaxInt(G))}$. Since $G$ has only one latent variable, the maximal interruption graph $\MaxInt(G)$ corresponds to a standard Bell test, where the number of nonlocal parties is equal to the number of child vertices of the latent variable in $G$. In $\MaxInt(G)$, ${\GPT(\MaxInt(G))\subseteq\IND(\MaxInt(G))}$ holds, as the non-trivial requirements in $\IND(\MaxInt(G))$ correspond to no-signalling conditions, which should be satisfied by any valid GPT. Also, ${\IND(\MaxInt(G))=\BW(\MaxInt(G))}$, where $\BW$ corresponds to the BoxWorld, which is a valid GPT under assumptions in Box~\ref{box:GPTAssumption}~\cite{popescu1994quantum,barrett2007information,gross2010all}. Since ${\BW(\MaxInt(G))\subseteq\GPT(\MaxInt(G))}$, we have that $\GPT(G)=\MI(G)$.

\section{Definition of causal copy isomorphism}\label{app:defcopyisomorphism}

\begin{figure*}[hbtp!]
\centering
 \includegraphics[width=\textwidth]{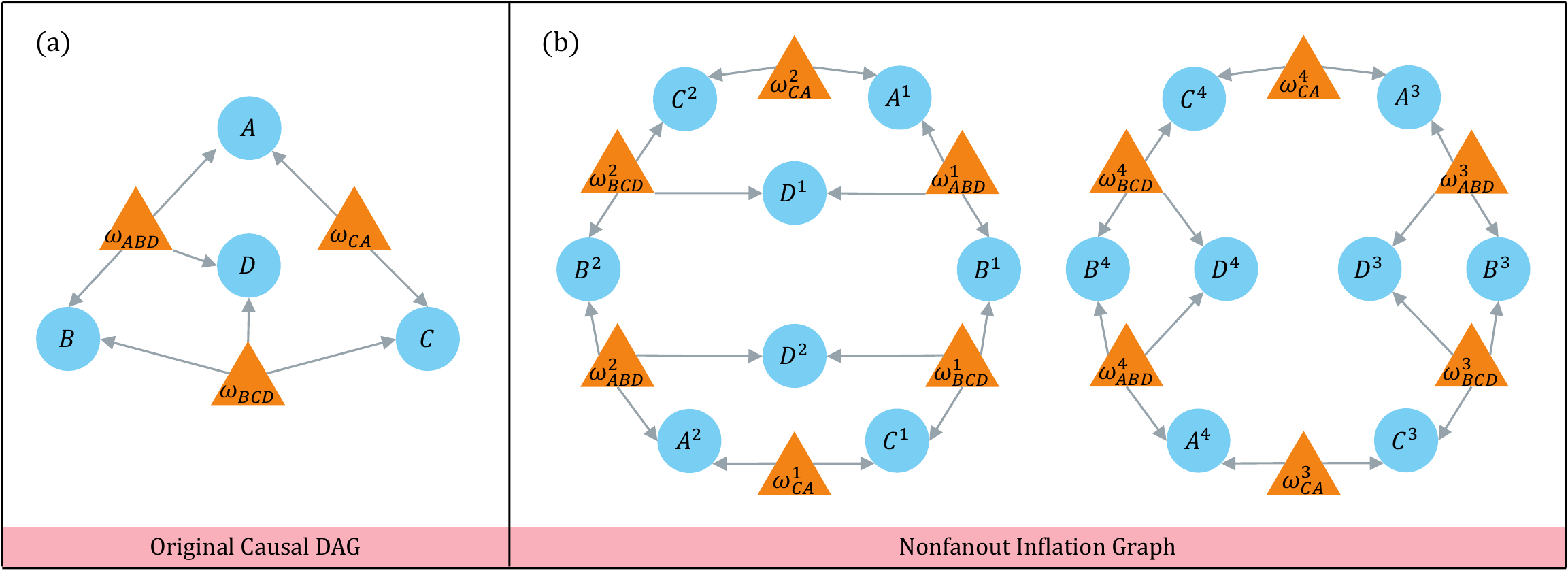}
\caption{Significance of causal copy isomorphisms. (a) An example causal DAG. It involves four observed variables, $A,B,C$, and $D$, which are denoted by blue circles, and three latent variables, $\omega_{ABD},\omega_{BCD}$, and $\omega_{CA}$, which are denoted by orange triangles. (b) A nonfanout inflation of the original causal DAG. Every vertex has a name and a copy index, which indicate how it is related to the vertices in the original causal DAG.}
\label{fig:TriBDinflation}
\end{figure*}

\begin{definition}[\textbf{Causal copy isomorphism}]
Given a causal DAG $G$, consider a nonfanout inflation graph of it, $G'(V',W',\mathcal{E}')$, where $V'$ is the set of vertices representing observed variables, $W'$ is the set of vertices representing latent variables, and $\mathcal{E}'$ is the edge set, and where each vertex has both a name and a copy index. Let $V^{\prime 1}$ and $V^{\prime 2}$ be two ordered subsets of $V'$ with the same cardinality. These two ordered sets are \textbf{causally copy isomorphic} if and only if there exists a bijective map $\phi$ relating the set of \emph{all} the vertices of $G'$ to itself, such that:

\begin{enumerate}[nosep]
\item $\phi(V^{\prime 1})=\phi(V^{\prime 2})$.

\item Let $\mathcal{E}^{\prime 1}$ be the (orderless) set of all edges in $\mathcal{E}'$ that terminate at either an element of $V^{\prime 1}$ or an ancestor of $V^{\prime 1}$ (i.e., $\mathcal{E}^{\prime 1}$ is the set of all edges pertinent to the causal ancestry of $V^{\prime 1}$), and let $\mathcal{E}^{\prime 2}$ be the corresponding collection of edges pertinent to the causal ancestry of $V^{\prime 2}$. Then, the bijective map on vertices must also relate these two edge sets; that is, $\phi(\mathcal{E}^{\prime 1})=\phi(\mathcal{E}^{\prime 2})$.

\item $\phi$ preserves names and only ever modifies a vertex's copy index.
\end{enumerate}
\end{definition}

Informally, $V^{\prime 1}$ and  $V^{\prime 2}$ are causally copy-isomorphic if and only if their ancestral subgraphs coincide under the deflation map;
that is, once we ignore differences between vertices that differ only by their copy index, the two ancestral subgraphs become identical.

As an intuitive example, consider the causal DAG in Figure~\ref{fig:TriBDinflation}(a), which involves four observable variables and three latent variables. We depict one of its nonfanout inflation graphs in Figure~\ref{fig:TriBDinflation}(b). First, note that this inflation graph has disconnected components, which can happen. As per Definition~\ref{Def:Injectabl}, some injectable sets are $\{A^1,C^2,D^1\}$ and $\{A^1,B^1\}$ and $\{B^3,D^3\}$. The existence of causal copy isomorphisms implies that $p(A^1,B^1,C^1,A^2,B^2,C^2)=p(A^2,B^2,C^2,A^1,B^1,C^1)$ (per a causal copy \emph{automorphism}) as well as $p(A^1,B^1,C^1,A^2,B^2,C^2)=p(A^3,B^3,C^3,A^4,B^4,C^4)$.

\section{A minimal causal structure separating causal models}\label{app:GNYI}
As is extensively explored in Bell nonlocality, we are interested in finding causal structures where different physical theories pose different constraints to the correlations among observed variables. In this section, we find a new DAG such that
\begin{equation}
    \C(G) \subsetneq \GPT(G)=\VVS(G) \subsetneq \NEST(G).
\end{equation}

Consider the DAG $G$ shown in Figure~\ref{fig:GYNDAG}(a), with $X,A,B,C\in\{0,1\}$. In this causal structure, a latent variable $\Lambda$ simultaneously affects observables $A,B$, and $C$. An exogenous variable $X$ directly affects $A$, which is mediated by $B$ to be carried to $C$. The nested Markov model of this causal structure trivially comprises all the probability distributions.

\begin{figure}[b!]
 \centering \includegraphics[width=\columnwidth]{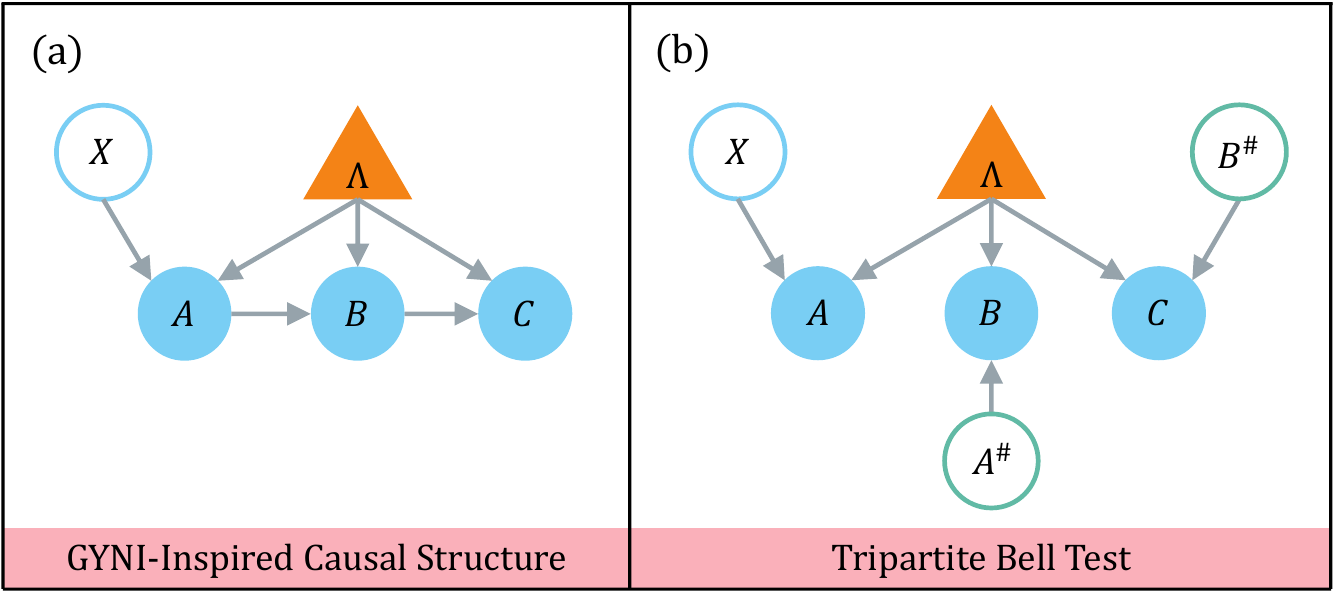}
\caption{A causal structure with $\C(G)\subsetneq\GPT(G)=\VVS(G)\subsetneq\NEST(G)$. (a) The causal DAG of the structure. The circles represent observable variables, including $X,A,B$, and $C$, and the triangle represents the latent variable $\Lambda$. (b) The visible-vertex-split graph of (a), which is the tripartite Bell test. The green hollow circles are added vertices, and a data post-selection of $A^{\#}{=}A$ and $B^{\#}{=}B$ projects the structure back to the original scenario.}
\label{fig:GYNDAG}
\end{figure}

From Proposition~\ref{prop:already_seminetwork}, we have $\NEST(G)=\IND(G)$. At the same time, as $G$ contains the instrumental scenario as a sub-structure, where the no-signalling condition poses the non-trivial condition in Eq.~\eqref{eq:instrument} on the observational data, $\GPT(G)$ is a strict subset of $\NEST(G)$. Additionally, we have $\GPT(G)=\VVS(G)$ for this DAG per Proposition~\ref{prop:MINF_saturated}.

To show the strict inclusion of $\C(G)\subsetneq\GPT(G)$, we look into the visible-vertex-split graph of $G$. As shown in Figure~\ref{fig:GYNDAG}(b), the high-dimensional structure represents the tripartite standard Bell test with input variables $X,A^{\#}$, and $B^{\#}$, which set random measurements on a tripartite physical state $\Lambda$ with measurement results $A,B$, and $C$, respectively. In the tripartite Bell test, there exist no-signalling nonlocal statistics---which represent valid physical states in the BoxWorld---that cannot be generated by any classical or even quantum operations~\cite{almeida2010guess}. One such distribution is given by
\begin{equation}
  p(a,b,c|x,a^{\#},b^{\#})=P_1(a,b,c|x,a^{\#},b^{\#})\equiv\frac{1}{3}f(a,b,c,x,a^{\#},b^{\#}), 
\end{equation}
with
\begin{equation}
\begin{split}
  f(a,b,c,x,a^{\#},b^{\#})=&(1\oplus b\oplus x\oplus a^{\#}\oplus xa^{\#})(1\oplus c\oplus b^{\#}) \\
  &\oplus a(1\oplus a^{\#}\oplus ca^{\#}\oplus b(c\oplus b^{\#})),
\end{split}
\end{equation}
where all the variables take values in $\{0,1\}$. Intuitively, the distribution can be understood as an optimal strategy to play a nonlocal game called ``guess your neighbour's input'' (GYNI)~\cite{almeida2010guess}, where three nonlocal players win the game if and only if $a{=}a^{\#}, b{=}b^{\#}$, and $c{=}x$, namely, each player's output makes a correct guess of their neighbour's input. Except $c{=}x$, the winning condition is inherently similar to the projection operation in defining the visible-vertex-split model. Consequently, we obtain a set of valid statistics in $\GPT(G)$:
\begin{equation}
  p(a,b,c|x)=P_1(a,b,c|x,a^{\#}{=}a,b^{\#}{=}b),
\end{equation}
with
\begin{equation}\label{eq:GYNIproj1}
\begin{split}
    &p(a,b,c|x)
    \\
    =&
    \begin{cases}
        \frac{1}{3}, & \mbox{if } (a,b,c)=(0,0,0), (1,0,1), (1,1,0) \mbox{ when } x{=}0, \\
        \frac{1}{3}, & \mbox{if } (a,b,c)=(0,1,1),(1,0,1),(1,1,0) \mbox{ when } x{=}1, \\
        0, & \mbox{otherwise}.
    \end{cases}
\end{split}
\end{equation}
We prove that Eq.~\eqref{eq:GYNIproj1} does not admit a causal explanation in a classical Bayesian network, leading to the strict separation of $\C(G)\subsetneq\GPT(G)$. To prove this result, first note that the set of probability distributions that can be generated from the causal DAG $G$ of Figure~\ref{fig:GYNDAG}(a) within a GPT is convex, as there is only one latent variable. In particular, the set of distributions $\C(G)$ within a Bayesian network is a convex polytope. Therefore, we can list all extremal distributions of $\C(G)$. 

For this purpose, we can resort to the visible vertex split graph in Figure~\ref{fig:GYNDAG}(a), determine all the extremal distributions in $\C(\VVSplit(G))$, and utilize the projection in Definition~\ref{def:projection}. When all the observed variables are binary, the set of probability distributions of $\C(\VVSplit(G))$ is spanned by the following extremal points:
\begin{equation}
    p(a,b,c|x,a^{\#},b^{\#})=\begin{cases}
    1,& \begin{smallmatrix}a=\alpha x\oplus\beta,\\b=\gamma a^{\#}\oplus\delta,\\c=\theta b^{\#}+\eta,\end{smallmatrix}\\
    0,& \text{otherwise},
    \end{cases}
\end{equation}
where $\alpha,\beta,\gamma,\delta,\theta,\eta \in \{0,1\}$. In total, there are $2^6=64$ extreme points in $\C(H(G))$. Next, we take the projection and obtain the extremal points in $\C(G)$ by taking $A^{\#}=A,B^{\#}=B$. Note that the projection results in some degeneracy. After the projection, the values of $A=a,B=b$, and $C=c$ are determined by
\begin{equation}
\begin{split}
    a&=\alpha x\oplus\beta,\\
    b&=\gamma a\oplus\delta=\gamma (\alpha x\oplus\beta)\oplus\delta,\\
    c&=\theta b+\eta=\theta (\gamma (\alpha x\oplus\beta)\oplus\delta)+\eta.
\end{split}
\end{equation}
Afterwards, we can set up a linear programming optimization to test whether Eq.~\eqref{eq:GYNIproj1} can be written as a convex combination of these extremal points. This is impossible, hence proving that the probability distribution in Eq.~\eqref{eq:GYNIproj1} is outside $\C(G)$.

\newpage
\nocite{apsrev42Control}
\bibliographystyle{apsrev4-2}
\bibliography{bibCausal}

\begin{acknowledgments}
We acknowledge Lorenzo Giannell, Yuwei Zhu, Yuan Liu, Thomas Richardson, and Stefano Pironio for the insightful discussions.
XZ acknowledges support by the Hong Kong Research Grant Council (RGC) through grant number R7035-21, SRFS2021-7S02, No. 27300823, and N\_HKU718/23, HKU Seed Fund for Basic Research for New Staff via Project 2201100596, Guangdong Natural Science Fund via Project 2023A1515012185, National Natural Science Foundation of China (NSFC) via Project No. 12305030 and No. 12347104, and Guangdong Provincial Quantum Science Strategic Initiative GDZX2200001. This research was supported in part by Perimeter Institute for Theoretical Physics. Research at Perimeter Institute is supported in part by the Government of Canada through the Department of Innovation, Science and Economic Development and by the Province of Ontario through the Ministry of Colleges and Universities.
\end{acknowledgments}

\end{document}